\newtheorem{lem}{Lemma}
\newtheorem{theorem}{Theorem}
\newtheorem{definition}{Definition}
\newtheorem{corollary}{Corollary}
\newtheorem{proposition}[theorem]{Proposition}
\def\ve{{\bf e}}
\def\U{\mathbf{\Upsilon}}
\def\ve{\varepsilon}
\def\util{u(\mathbf{B},\mathbf{P})}
\def\ln{{\rm ln}}
\def\U{\mathbf{\Upsilon}}
\def\ra{\rightarrow}
\begin{document}
\title{\LARGE Higher Dimensional Consensus: Learning in Large-Scale Networks}
\author{Usman~A.~Khan, Soummya Kar, and Jos\'e M.~F.~Moura\\
\thanks{This work was partially supported by NSF under grant~\#~CNS-0428404, by ONR under grant \#~MURI-N000140710747, and by an IBM Faculty Award.}
            Department of Electrical and Computer Engineering\\
            Carnegie Mellon University, 5000 Forbes Ave, Pittsburgh, PA 15213\\
            \{ukhan, moura\}@ece.cmu.edu, soummyak@andrew.cmu.edu\\
            Ph: (412)268-7103 Fax: (412)268-3890
}

\maketitle
\begin{abstract}
 The paper presents higher dimension consensus (HDC) for large-scale networks. HDC generalizes the well-known average-consensus algorithm. It divides the nodes of the large-scale network into anchors and sensors. Anchors are nodes whose states are fixed over the HDC iterations, whereas sensors are nodes that update their states as a linear combination of the neighboring states. Under appropriate conditions, we show that the sensors' states converge to a linear combination of the anchors' states. Through the concept of anchors, HDC captures in a unified framework several interesting network tasks, including distributed sensor localization, leader-follower, distributed Jacobi to solve linear systems of algebraic equations, and, of course, average-consensus.

 In many network applications, it is of interest to \emph{learn} the weights of the distributed linear algorithm so that the sensors converge to a desired state. We term this \emph{inverse} problem the HDC \emph{learning} problem. We pose learning in HDC as a constrained non-convex optimization problem, which we cast in the framework of multi-objective optimization~(MPO) and to which we apply Pareto optimality. We prove analytically relevant properties of the MOP solutions and of the Pareto front from which we derive the solution to learning in HDC. Finally, the paper shows how the MOP approach resolves interesting tradeoffs (speed of convergence versus quality of the final state) arising in learning in HDC in resource constrained networks.
\end{abstract}
\begin{keywords}
Distributed algorithms, Higher dimensional consensus, Large-scale networks, Spectral graph theory, Multi-objective optimization, Pareto optimality.
\end{keywords}

\newpage

\section{Introduction}\label{intro}
This paper provides a unified framework, \emph{high-dimensional consensus}~(HDC), for the analysis and design of linear distributed algorithms for large-scale networks--including distributed Jacobi algorithm \cite{tsit_book}, average-consensus \cite{jadbabailinmorse03,boyd:04,olfati_rev,Nedic,Fagnani,Tuncer}, distributed sensor localization \cite{usman_loc2:08}, distributed matrix inversion \cite{usman_icassp:07}, or leader-follower algorithms \cite{usman_icassp:09,khankarmoura-sep08}. These applications arise in many resource constrained large-scale networks, e.g., sensor networks, teams of robotic platforms, but also in cyber-physical systems like the smart grid in electric power systems. We view these systems as a collection of nodes interacting over a sparse communication graph. The nodes have, in general, strict constraints on their communication and computation budget so that only local communication and low-order computation is feasible at each node.

Linear distributed algorithms for constrained large-scale networks are iterative in nature; the information is fused over the iterations of the algorithm across the sparse network. In our formulation of HDC, we view the large-scale network as a graph with edges connecting sparsely a collection of nodes; each node is described by a state. The nodes are partitioned in anchors and sensors.  Anchors do not update their state over the HDC iterations, while the sensors iteratively update their states by a linear, possibly convex, combination of their neighboring sensors' states. The weights of this linear combination are the parameters of the HDC. For example, in sensor localization~\cite{usman_loc2:08}, the state at each node is its current position estimate. Anchors may be nodes instrumented with a GPS unit, knowing its precise location and the remaining nodes are the sensors that don't know their location and for which HDC iteratively updates their  state, i.e., their location, in a distributed fashion. The weights of HDC are for this problem the barycentric coordinates of the sensors with respect to a group of neighboring nodes, see~\cite{usman_loc2:08}.

We consider two main issues in HDC.

 \textbf{Analysis: Forward Problem} Given the HDC weights or parameters and the sparse underlying connectivity graph determine \begin{inparaenum}[(i)] \item under what conditions does the HDC converge; \item to what state does the HDC converge;  and \item what is the convergence rate. The forward problem establishes the conditions for convergence, the convergence rate, and the convergent state of the network. \end{inparaenum}

 \textbf{Learning: Inverse Problem} Given the desired state to which HDC should converge and the sparsity graph \emph{learn} the HDC parameters so that indeed HDC does converge to that state. Due to the sparsity constraints, it may not be possible for HDC to converge exactly to the desired state. An interesting tradeoff that we pursue is between the speed of convergence and the quality of the limiting HDC converging state, given by some measure of the error between the final state and the desired state. Clearly, the learning problem is an inverse problem that we will formulate as the minimization of a utility function under constraints.

 A naive formulation of the learning problem is not feasible. Ours is in terms of a constrained non-convex optimization formulation that we solve by casting it in the context of a multi-objective optimization problem~(MOP), \cite{par_book}. We apply to this MOP Pareto optimization. To derive the optimal Pareto solution, we need to characterize the Pareto front (locus of Pareto optimal solutions.) Although usually it is hard to determine the Pareto front and requires extensive iterative procedures, we exploit the structure of our problem to prove smoothness, convexity, strict decreasing monotonicity, and differentiability properties of the Pareto front. With the help of these properties, we can actually derive an efficient procedure to generate Pareto-optimal solutions to the MOP, determine the Pareto front, and find the solution to the learning problem. This solution is found by a rather expressive geometric argument.

\subsection{Organization of the Paper}
We now describe the rest of the paper. Section~\ref{prelim} introduces notation and relevant definitions, whereas Section~\ref{gcn} provides the problem formulation. We discuss the forward problem (analysis of HDC) in Section~\ref{fp_hdc} and the inverse problem (learning in large-scale networks) in Sections~\ref{lp_liln}--\ref{mo_ufn}. Finally, Section~\ref{conc} concludes the paper.

\section{Preliminaries}\label{prelim}
This section introduces the notation used in the paper and reviews relevant concepts from spectral graph theory and multi-objective optimization.

\subsection{Spectral Graph Theory}
Consider a sensor network with~$N$ nodes. We partition this network into~$K$ anchors and~$M$ sensors, such that~$N=K+M$. As discussed in Section~\ref{intro}, the anchors are the nodes whose states are fixed, and the sensors are the nodes that update their states as a linear combination of the states of their neighboring nodes. Let~$\kappa=\{1,\ldots,K\}$ be the set of anchors and let~$\Omega=\{K+1,\ldots,N\}$ be the set of sensors. The set of all of the nodes is then denoted by~$\Theta=\kappa\cup\Omega$.

We model the network by a directed graph, $\mathcal{G}=(V,\mathbf{A})$, where, $V=\{1,\ldots,N\}$, denotes the set of nodes. The interconnections among the nodes are given by the adjacency matrix,~$\mathbf{A}=\{a_{lj}\}$, where
\begin{eqnarray}
a_{lj} = \left\{\begin{array}{cc}
1, & l\leftarrow j,\\
0, & \mbox{otherwise},
\end{array}\right.
\end{eqnarray}
and the notation~$l\leftarrow j$ implies that node~$j$ can send information to node~$l$. The neighborhood,~$\mathcal{K}(l)$, at node~$l$ is
\begin{equation}
\mathcal{K}(l) = \{j~|~a_{lj}=1\}.
\end{equation}
The classification of nodes into sensors and anchors naturally induces the partitioning of the neighborhood,~$\mathcal{K}(l)$, at each sensor,~$l$, i.e.,
\begin{align}
\mathcal{K}_\Omega(l) &= \mathcal{K}(l) \cap \Omega,&\mathcal{K}_\kappa(l) &=\mathcal{K}(l) \cap \kappa,&
\end{align}
where~$\mathcal{K}_\Omega(l)$ and~$\mathcal{K}_\kappa(l)$ are the set of sensors and the set of anchors in sensor~$l$'s neighborhood, respectively.

As a graph can be characterized by its adjacency matrix, to every matrix we can associate a graph. For a matrix,~$\mathbf{\Upsilon}=\{\upsilon_{lj}\}\in\mathbb{R}^{N\times N}$, we define its associated graph by~$\mathcal{G}^{\mathbf{\Upsilon}}=(V^{\mathbf{\Upsilon}},\mathbf{A}^{\mathbf{\Upsilon}}),$ where~$V^{\mathbf{\Upsilon}}=\{1,\ldots,N\}$ and~$\mathbf{A}^{\mathbf{\Upsilon}}=\{a^{\mathbf{\Upsilon}}_{lj}\}$ is given by
\begin{equation}
a^{\mathbf{\Upsilon}}_{lj} = \left\{\begin{array}{cc}
1, & \upsilon_{lj}\neq 0,\\
0, & \upsilon_{lj}= 0.
\end{array}\right.
\end{equation}

The convergence properties of distributed algorithms depend on spectral properties of associated matrices. In the following, we recall the definition of spectral radius. The spectral radius,~$\rho(\mathbf{P})$, of a matrix,~$\mathbf{P}\in\mathbb{R}^{M\times M}$, is defined as
\begin{equation}
\rho(\mathbf{P}) = \max_i|\lambda_{i(\mathbf{P})}|,
\end{equation}
where~$\lambda_{i(\mathbf{P})}$ denotes the~$i$th eigenvalue of~$\mathbf{P}$. We also have
\begin{eqnarray}\label{rPdef}
\rho(\mathbf{P}) &=& \lim_{q\rightarrow\infty}\|\mathbf{P}^q\|^{1/q},
\end{eqnarray}
where~$\|\cdot\|$ is any matrix induced norm.

\subsection{Multi-objective Optimization Problem (MOP): Pareto-Optimality}\label{mop_sec}
In this subsection, we recall facts on multi-objective optimization theory that we will use to develop the solutions of the learning problem. We consider the following constrained optimization problem. Let~$\{f_k(\mathbf{y})\}_{k=1,\ldots,n}$ be real-valued functions,
\begin{eqnarray}
f_k : \mathcal{X} &\rightarrow& \mathbb{R}, \qquad \forall k
\end{eqnarray}
on some topological space,~$\mathcal{X}$ (in this work,~$\mathcal{X}$ will always be a finite-dimensional vector space). The vector objective function $\mathbf{f}(\mathbf{y})$ is
\begin{eqnarray}
\mathbf{f}(\mathbf{y}) =
\left[
\begin{array}{c}
f_1(\mathbf{y})\\
\vdots\\
f_n(\mathbf{y})
\end{array}
\right].
\end{eqnarray}
Let~$\{v_{\overline{k}}\}$ be a family of~$\overline{n}$ real-valued functions on~$\mathcal{X}$ representing the inequality constraints and~$\{w_{\underline{k}}\}$ be a family of~$\underline{n}$ real-valued functions on~$\mathcal{X}$ representing the equality constraints. The feasible set of solutions,~$\mathcal{Y}$, is defined as
\begin{equation}
\mathcal{Y} = \{\mathbf{y}\in\mathcal{X}~|~v_{\overline{k}}(\mathbf{y})\leq V_{\overline{k}},~\forall \overline{k}, \mbox{ and }w_{\underline{k}}(\mathbf{y})= W_{\underline{k}},~\forall {\underline{k}}\},
\end{equation}
where~$V_{\overline{k}},W_{\overline{k}}\in\mathbb{R}$. The multi-objective optimization problem (MOP) is given by
\begin{eqnarray}\label{mop_eq_eq}
\min_{\mathbf{y}\in\mathcal{Y}} \mathbf{f}(\mathbf{y}).
\end{eqnarray}
Note that the inequality constraints,~$v_{\overline{k}}(\mathbf{y})$, and the equality constraints,~$w_{\underline{k}}(\mathbf{y})$, appear in the set of feasible solutions and, thus, are implicit in~\eqref{mop_eq_eq}.

In general, the MOP has non-inferior solutions, i.e., the MOP has a set of solutions none of which is inferior to the other. The solutions of the MOP are, thus, categorized as Pareto-optimal \cite{par_book}, defined in the following.
\begin{definition}\label{po_def}[Pareto optimal solutions] A solution,~$\mathbf{y}^\ast$, is said to be a Pareto optimal (or non-inferior) solution of a~MOP if there exists no other feasible~$\mathbf{y}$ (i.e., there is no~$\mathbf{y}\in\mathcal{Y}$) such that~$\mathbf{f}(\mathbf{y})\leq\mathbf{f}(\mathbf{y}^\ast)$, meaning that~$f_k(\mathbf{y})\leq f_k(\mathbf{y}^\ast),~\forall k$, with strict inequality for at least one~$k$.
\end{definition}

The general methods to solve the MOP, for example, include the weighting method, the Lagrangian method, and the~$\mathbf{\varepsilon}$-constraint method. These methods can be used to find the Pareto-optimal solutions of the MOP. In general, these approaches require extensive iterative procedures to establish Pareto-optimality of a solution, see~\cite{par_book} for details.

%The~$\underline{\varepsilon}$-constraint problem, denoted by~$P_k(\underline{\varepsilon})$, is the following:
%\begin{eqnarray}
%\min_{\mathbf{y}\in\mathcal{Y}}f_k(\mathbf{y})&&\\\label{in_co}
%\mbox{subject to}\qquad f_j(\mathbf{y})&\leq&\varepsilon_j, \qquad j=1\ldots,n,~~j\neq k,
%\end{eqnarray}
%where~$\underline{\varepsilon} = [\varepsilon_1,\ldots,\varepsilon_{k-1},\varepsilon_{k+1},\ldots,\varepsilon_n]^T$. For a given solution,~$\mathbf{y}^\ast$, we use the symbol~$P_k(\varepsilon^\ast)$ to represent the problem~$P_k(\varepsilon)$, where~$\varepsilon_j=\varepsilon_j^\ast=f_j(\mathbf{y}^\ast)$,~$\forall j\neq k$. In other words,~$P_k(\varepsilon^\ast)$ is the minimization problem whose solution meets the inequality ($\leq$) constraint in~\eqref{in_co} at the equality. For details, see \cite{par_book}.

\section{Problem Formulation}\label{gcn}
Consider a sensor network with~$N$ nodes communicating over a network described by a directed graph,~$\mathcal{G}=(V,\mathbf{A})$. Let~$\mathbf{u}_k\in\mathbb{R}^{1\times m}$ be the state associated to the~$k$th anchor, and let~$\mathbf{x}_l\in\mathbb{R}^{1\times m}$ be the state associated to the~$l$th sensor. We are interested in studying linear iterative algorithms of the form
\begin{eqnarray}\label{alg1}
\mathbf{u}_k(t+1) &=& \mathbf{u}_k(t) = \mathbf{u}_k(0),\qquad\qquad\qquad\qquad\qquad\qquad\qquad t\geq0,\:k\in\kappa,\\\label{alg2}
\mathbf{x}_l(t+1) &=& p_{ll}\mathbf{x}_l(t) + \sum_{j\in\mathcal{K}_\Omega(l)}p_{lj}\mathbf{x}_j(t) + \sum_{k\in\mathcal{K}_\kappa(l)}b_{lk}\mathbf{u}_k(0),\qquad t\geq0,\:l\in\Omega,
\end{eqnarray}
where:~$t$ is the discrete-time iteration index; and ~$b_{lj}$'s and~$p_{lk}$'s are the state updating coefficients. We assume that the updating coefficients are constant over the components of the~$m$-dimensional state,~$\mathbf{x}_l(t)$.  We term distributed linear iterative algorithms of the form~\eqref{alg1}--\eqref{alg2} as \emph{Higher Dimensional Consensus (HDC)} algorithms\footnote{As we will show later, the HDC algorithms contain the conventional average consensus algorithms, \cite{boyd:04,olfati_rev}, as a special case. The notion of higher dimensions is technical and will be studied later.}~\cite{khankarmoura-sep08,usman_icassp:09}.

For the purpose of analysis, we write the HDC~\eqref{alg1}--\eqref{alg2} in matrix form. Define
\begin{align}
\mathbf{U}(t)&=\left[\mathbf{u}_1^T(t),\ldots,\mathbf{u}_K^T(t)\right]^T \in \mathbb{R}^{K\times m},&\mathbf{X}(t)&=\left[\mathbf{x}_{K+1}^T(t),\ldots,\mathbf{x}_N^T(t)\right]^T\in \mathbb{R}^{M\times m},&\\
\mathbf{P}&=\{p_{lj}\}\in\mathbb{R}^{M\times M},&
\mathbf{B}&=\{b_{lk}\}\in\mathbb{R}^{M\times K}.&
\end{align}
With the above notation, we write~\eqref{alg1}--\eqref{alg2} concisely as
\begin{eqnarray}\label{alg_mat}
\left[
\begin{array}{c}
\mathbf{U}(t+1)\\
\mathbf{X}(t+1)
\end{array}
\right] &=&
\left[
\begin{array}{cc}
\mathbf{I} & \mathbf{0}\\
\mathbf{B} & \mathbf{P}
\end{array}
\right]
\left[
\begin{array}{c}
\mathbf{U}(t)\\
\mathbf{X}(t)
\end{array}
\right],\\\label{Ceq}
\triangleq \mathbf{C}(t+1) &=& \mathbf{\Upsilon}\mathbf{C}(t).
\end{eqnarray}
Note that the graph,~$\mathcal{G}(\U)$, associated to the~$N\times N$ iteration matrix,~$\U$, is a subgraph of the communication graph,~$\mathcal{G}$. In other words, the sparsity of~$\U$ is dictated by the sparsity of the underlying sensor network, i.e., a non-zero element,~$\upsilon_{lj}$, in~$\U$ implies that node~$j$ can send information to node~$l$ in the original graph~$\mathcal{G}$. In the iteration matrix,~$\U$: the~$M\times M$ lower right submatrix,~$\mathbf{P}$, collects the updating coefficients of the~$M$ sensors with respect to the~$M$ sensors; and the lower left submatrix,~$\mathbf{B}$, collects the updating coefficients of the~$M$ sensors with respect to the~$K$ anchors. From~\eqref{alg_mat}, the matrix form of the HDC in~\eqref{alg2} is
\begin{eqnarray}
\label{gen_mod}\mathbf{X}(t+1) &=& \mathbf{PX}(t) + \mathbf{BU}(0),\qquad~t\geq0.
\end{eqnarray}
In this paper, we study the following two problems that arise in the context of the HDC.

\textbf{Analysis: Forward problem} Given an~$N$-node sensor network with a communication graph,~$\mathcal{G}$, the matrices~$\mathbf{B}$, and~$\mathbf{P}$, and the network initial conditions,~$\mathbf{X}(0)$ and~$\mathbf{U}(0)$; what are the conditions under which the HDC converges? What is the convergence rate of the HDC? If the HDC converges, what is the limiting state of the network?

\textbf{Learning: Inverse problem} Given an~$N$-node sensor network with a communication graph,~$\mathcal{G}$, and an~$M\times K$ weight matrix,~$\mathbf{W}$, learn the matrices~$\mathbf{P}$ and~$\mathbf{B}$ in~\eqref{gen_mod} such that the HDC converges to
\begin{equation}\label{Weq}
\lim_{t\rightarrow\infty}\mathbf{X}(t+1) = \mathbf{WU}(0),
\end{equation}
for every~$\mathbf{U}(0)\in\mathbb{R}^{K\times m}$; if multiple solutions exist, we are interested in finding a solution that leads to fastest convergence.

\section{Forward Problem: Higher Dimensional Consensus}\label{fp_hdc}
As discussed in Section~\ref{gcn}, the HDC algorithm is implemented as~\eqref{alg1}--\eqref{alg2}, and its matrix representation is given by~\eqref{Ceq}. We divide the study of the HDC in the following two categories.
\begin{enumerate}[(A)]
\item No anchors: $\mathbf{B=0}$
\item Anchors: $\mathbf{B\neq0}$
\end{enumerate}
We analyze these two cases separately. In addition, we also provide, briefly, practical applications where each of them is relevant.

\subsection{No anchors: $\mathbf{B=0}$}
In this case, the HDC reduces to
\begin{eqnarray}\label{gen_mod1}
\mathbf{X}(t+1) &=& \mathbf{PX}(t)\nonumber,\\
&=&\mathbf{P}^{t+1}\mathbf{X}(0).
\end{eqnarray}
An important problem covered by this case is average-consensus. As well known, if
\begin{equation}
\rho(\mathbf{P})=1,
\end{equation}
with $\mathbf{1}^T$ and $\mathbf{1}$ being the left and right eigenvectors of~$\mathbf{P}$, respectively, then we have
\begin{equation}\label{ffkk}
\lim_{t\ra\infty} \mathbf{P}^{t+1} = \dfrac{\mathbf{11}^T}{M},
\end{equation}
under some minimal network connectivity assumptions, where~$\mathbf{1}$ is the~$M\times 1$ column vector of~$1$'s and~$M$ is the number of sensors. The sensors converge to the average of the initial sensors' states. The convergence rate is dictated by the second largest (in magnitude) eigenvalue of the matrix~$\mathbf{P}$. For more precise and general statements in this regard, see for instance,~\cite{boyd:04,olfati_rev}. Average-consensus, thus, is a special case of the HDC, when~$\mathbf{B=0}$ and~$\rho(\mathbf{P})=1$. This problem has been studied in great detail. Relevant references include \cite{Mesbahi-Anchor,tsp07-K-M,Jadbabai,tsp06-K-A-M,karmoura-randomtopologynoise,Kashyap,Huang}. The rest of this paper deals entirely with the case~$\rho(\mathbf{P})<1$ and the term HDC subsumes the~$\rho(\mathbf{P})<1$ case, unless explicitly noted. When~$\mathbf{B=0}$, the HDC (with~$\rho(\mathbf{P})<1$) leads to~$\mathbf{X}_{\infty}=\mathbf{0}$, which is not very interesting.

\subsection{Anchors: $\mathbf{B\neq0}$}
This extends the average-consensus to ``higher dimensions'' (as will be explained in Section~\ref{con_ss}.) Lemma~\ref{lem_conv} establishes: (i) the conditions under which the HDC converges; (ii) the limiting state of the network; and (iii) the rate of convergence of the HDC.
\begin{lem}\label{lem_conv}
Let~$\mathbf{B}\neq0$. If
\begin{equation}
\label{spec_l}
\rho(\mathbf{P})<1,
\end{equation}
then the limiting state of the sensors,
\begin{equation}\label{gen_mod_lim}
\mathbf{X}_\infty \triangleq \lim_{t\rightarrow\infty}\mathbf{X}(t+1) = \left(\mathbf{I-P}\right)^{-1}\mathbf{BU}(0),
\end{equation}
and the error, $\mathbf{E}(t)= \mathbf{X}(t) - \mathbf{X}_{\infty}$, decays exponentially to~$0$ with exponent~$\ln(\rho(\mathbf{P}))$, i.e.,
\begin{equation}\label{lem_conv_eq}
\limsup_{t\rightarrow\infty}\dfrac{1}{t}\ln\|\mathbf{E}(t)\|~\leq~\ln(\rho(\mathbf{P})).
\end{equation}
\end{lem}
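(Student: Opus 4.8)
The plan is to analyze the linear recursion $\mathbf{X}(t+1) = \mathbf{PX}(t) + \mathbf{BU}(0)$ directly by unrolling it. First I would iterate the recursion from the initial condition to obtain the closed form
\begin{equation}
\mathbf{X}(t+1) = \mathbf{P}^{t+1}\mathbf{X}(0) + \left(\sum_{s=0}^{t}\mathbf{P}^{s}\right)\mathbf{BU}(0).
\end{equation}
The hypothesis $\rho(\mathbf{P})<1$ is exactly the condition that guarantees (i) $\mathbf{P}^{t+1}\rightarrow\mathbf{0}$ and (ii) the Neumann series $\sum_{s\geq0}\mathbf{P}^{s}$ converges to $(\mathbf{I}-\mathbf{P})^{-1}$; in particular $\mathbf{I}-\mathbf{P}$ is invertible since $1$ is not an eigenvalue of $\mathbf{P}$. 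Passing to the limit $t\rightarrow\infty$ in the closed form then gives $\mathbf{X}_\infty = (\mathbf{I}-\mathbf{P})^{-1}\mathbf{BU}(0)$, which is~\eqref{gen_mod_lim}.

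Next I would establish the rate. Using the identity $\mathbf{X}_\infty = \mathbf{P}\mathbf{X}_\infty + \mathbf{BU}(0)$ (which follows by multiplying~\eqref{gen_mod_lim} on the left by $\mathbf{I}-\mathbf{P}$), subtract it from the recursion to get a homogeneous recursion for the error: $\mathbf{E}(t+1) = \mathbf{X}(t+1)-\mathbf{X}_\infty = \mathbf{P}\bigl(\mathbf{X}(t)-\mathbf{X}_\infty\bigr) = \mathbf{P}\mathbf{E}(t)$, hence $\mathbf{E}(t) = \mathbf{P}^{t}\mathbf{E}(0)$. Taking any submultiplicative matrix norm, $\|\mathbf{E}(t)\| \leq \|\mathbf{P}^{t}\|\,\|\mathbf{E}(0)\|$, so
\begin{equation}
\frac{1}{t}\ln\|\mathbf{E}(t)\| \leq \frac{1}{t}\ln\|\mathbf{P}^{t}\| + \frac{1}{t}\ln\|\mathbf{E}(0)\|.
\end{equation}
The second term vanishes as $t\rightarrow\infty$, and by the Gelfand spectral radius formula~\eqref{rPdef}, $\|\mathbf{P}^{t}\|^{1/t}\rightarrow\rho(\mathbf{P})$, so $\tfrac{1}{t}\ln\|\mathbf{P}^{t}\|\rightarrow\ln\rho(\mathbf{P})$. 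Taking $\limsup$ yields~\eqref{lem_conv_eq}. (One should note the degenerate case $\mathbf{E}(0)=\mathbf{0}$, i.e.\ $\mathbf{X}(0)=\mathbf{X}_\infty$, where both sides are $-\infty$ and the inequality holds trivially; similarly if $\mathbf{P}$ is nilpotent the left side is $-\infty$.)

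I do not expect a genuine obstacle here — the result is a standard fixed-point/Gelfand-formula argument. The only mild subtlety worth handling carefully is the interchange of limits and the Neumann-series convergence: to be rigorous one can cite that $\rho(\mathbf{P})<1$ implies $\sum_{s=0}^{t}\mathbf{P}^{s}\rightarrow(\mathbf{I}-\mathbf{P})^{-1}$ in any matrix norm (again a consequence of~\eqref{rPdef}, since $\|\mathbf{P}^{s}\|$ decays geometrically once $\|\mathbf{P}^{s}\|^{1/s}$ is eventually below some $r<1$), rather than arguing entrywise. With that in hand the proof is a short, clean computation.
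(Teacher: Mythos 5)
Your proof is correct and follows essentially the same route as the paper: unroll the recursion, invoke the Neumann series $\sum_k \mathbf{P}^k \to (\mathbf{I}-\mathbf{P})^{-1}$ for the limit, reduce the error to $\mathbf{E}(t)=\mathbf{P}^t\mathbf{E}(0)$, and apply the Gelfand formula~\eqref{rPdef}. The only (cosmetic) difference is that you obtain the homogeneous error recursion by subtracting the fixed-point identity, whereas the paper manipulates the partial sums of the series directly to reach the same expression $\mathbf{E}(t)=\mathbf{P}^t\bigl[\mathbf{X}(0)-\mathbf{X}_\infty\bigr]$; your remark about the degenerate case $\mathbf{E}(0)=\mathbf{0}$ is a small point of extra care the paper omits.
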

\begin{proof}
From~\eqref{gen_mod}, we note that
\begin{eqnarray}\label{gen_ic}
\mathbf{X}(t+1) &=& \mathbf{P}^{t+1}\mathbf{X}(0) + \sum_{k=0}^t\mathbf{P}^k\mathbf{B U}(0),\\
\Rightarrow\mathbf{X}_\infty &=& \lim_{t\rightarrow\infty}\mathbf{P}^{t+1}\mathbf{X}(0) + \lim_{t\rightarrow\infty}\sum_{k=0}^t\mathbf{P}^k\mathbf{B U}(0),
\end{eqnarray}
and~\eqref{gen_mod_lim} follows from~\eqref{spec_l} and Lemma~\ref{lem1} in Appendix~\ref{app1}. The error, $\mathbf{E}(t)$, is given by
\begin{eqnarray}\nonumber
\mathbf{E}(t)&=& \mathbf{X}(t) - (\mathbf{I-P})^{-1}\mathbf{BU}(0),\\\nonumber
&=& \mathbf{P}^{t}\mathbf{X}(0) + \sum_{k=0}^{t-1}\mathbf{P}^k\mathbf{BU}(0) - \sum_{k=0}^\infty\mathbf{P}^k\mathbf{BU}(0),\\\nonumber
&=& \mathbf{P}^{t}\ \left[\mathbf{X}(0)-\sum_{k=0}^\infty\mathbf{P}^k\mathbf{BU}(0)\right].
\end{eqnarray}
To go from the second equation to the third, we recall~\eqref{spec_l} and use~\eqref{lem2_eq2} from Lemma~\ref{lem1} in Appendix~\ref{app1}. Let~$\mathbf{R} = \mathbf{X}(0)-\sum_{k=0}^\infty\mathbf{P}^k\mathbf{BU}(0)$. To establish the convergence rate of~$\|\mathbf{E}(t)\|$, we have
\begin{eqnarray}\nonumber
\dfrac{1}{t}\ln\|\mathbf{E}(t)\| &=& \dfrac{1}{t}\ln\|\mathbf{P}^t \mathbf{R}\|,\\\nonumber
&\leq& \dfrac{1}{t}\ln\left(\|\mathbf{P}^t\| \|\mathbf{R}\|\right),\\
&\leq& \ln\|\mathbf{P}^t\|^{1/t}+\dfrac{1}{t}\ln \|\mathbf{R}\|.
\end{eqnarray}
Now, letting~$t\rightarrow\infty$ on both sides, we get
\begin{eqnarray}
\limsup_{t\rightarrow\infty}\dfrac{1}{t}\ln\|\mathbf{E}(t)\| &\leq & \limsup_{t\rightarrow\infty}\ln\|\mathbf{P}^t\|^{1/t}+\limsup_{t\rightarrow\infty}\dfrac{1}{t}\ln \|\mathbf{R}\|,\\
&=&\ln\lim_{t\rightarrow\infty}\|\mathbf{P}^t\|^{1/t},\\
&=&\ln\left(\rho(\mathbf{P})\right).
\end{eqnarray}
and~\eqref{lem_conv_eq} follows. The interchange of~$\lim$ and~$\ln$ is permissible because of the continuity of~$\ln$ and the last step follows from~\eqref{rPdef}.
\end{proof}
The above lemma shows that we require~\eqref{spec_l} for the HDC to converge. The limiting state of the network,~$\mathbf{X}_{\infty}$, is given by~\eqref{gen_mod_lim} and the error norm,~$\|\mathbf{E}(t)\|$, decays exponentially to zero with exponent~$\ln\left(\rho(\mathbf{P})\right)$. We further note that the limit state of the sensors,~$\mathbf{X}_\infty$, is independent of the sensors' initial conditions, i.e., the algorithm forgets the sensors' initial conditions and converges to~\eqref{gen_mod_lim} for any~$\mathbf{X}(0)\in\mathbb{R}^{M\times m}$. It is also straightforward to show that if~$\rho(\mathbf{P})\geq1$, then the HDC algorithm~\eqref{gen_mod} diverges for all~$\mathbf{U}(0)\in\mathcal{N}(\mathbf{B})$, where~$\mathcal{N}(\mathbf{B})$ is the null space of~$\mathbf{B}$. Clearly, the case~$\mathbf{U}(0)\in\mathcal{N}(\mathbf{B})$ is not interesting as it leads to~$\mathbf{X}_{\infty}=\mathbf{0}$.

\subsection{Consensus subspace}\label{con_ss}
We now define the consensus subspace as follows.
\begin{definition}[Consensus subspace]
Given the matrices,~$\mathbf{B}\in\mathbb{R}^{M\times K}$ and~$\mathbf{P}\in\mathbb{R}^{M\times M}$, the consensus subspace,~$\Xi$, is defined as
\begin{equation}
\Xi = \{\mathbf{X}_\infty~|~\mathbf{X}_\infty = \left(\mathbf{I-P}\right)^{-1}\mathbf{BU}(0),~\mathbf{U}(0)\in\mathbb{R}^{K\times m},~\rho(\mathbf{P})<1\}.
\end{equation}
\end{definition}
The dimension of the consensus subspace,~$\Xi$, is established in the following theorem.
\begin{theorem}
If~$K < M$ and~$\rho(\mathbf{P})<1$, then the dimension of the consensus subspace, ~$\Xi$, is
\begin{equation}
\mbox{dim}(\Xi) = m\mbox{rank}(\mathbf{B}) \leq mK.
\end{equation}
\end{theorem}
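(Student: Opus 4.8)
The plan is to recognize $\Xi$ as the image of a single linear map and then simply compute the rank of that map. Since $\rho(\mathbf{P})<1$, the matrix $\mathbf{I}-\mathbf{P}$ has no zero eigenvalue and is therefore invertible; write $\mathbf{Q}=(\mathbf{I}-\mathbf{P})^{-1}$. Then $\Xi$ is exactly the range of the linear map $\Phi:\mathbb{R}^{K\times m}\rightarrow\mathbb{R}^{M\times m}$ defined by $\Phi(\mathbf{U})=\mathbf{Q}\mathbf{B}\mathbf{U}$. In particular this shows at once that $\Xi$ is a genuine linear subspace of $\mathbb{R}^{M\times m}$ (it is closed under addition and scalar multiplication because it is the range of a linear map), so that $\mbox{dim}(\Xi)$ is well defined, and $\mbox{dim}(\Xi)=\mbox{rank}(\Phi)$.

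To evaluate $\mbox{rank}(\Phi)$ I would pass to vectorized form. Stacking columns, $\vecc(\Phi(\mathbf{U}))=(\mathbf{I}_m\otimes\mathbf{Q}\mathbf{B})\,\vecc(\mathbf{U})$, so $\mbox{dim}(\Xi)=\mbox{rank}(\mathbf{I}_m\otimes\mathbf{Q}\mathbf{B})$. Using the standard identity $\mbox{rank}(\mathbf{A}_1\otimes\mathbf{A}_2)=\mbox{rank}(\mathbf{A}_1)\,\mbox{rank}(\mathbf{A}_2)$, this equals $m\,\mbox{rank}(\mathbf{Q}\mathbf{B})$. Finally, because $\mathbf{Q}$ is invertible, left-multiplication by $\mathbf{Q}$ preserves rank, so $\mbox{rank}(\mathbf{Q}\mathbf{B})=\mbox{rank}(\mathbf{B})$, which gives $\mbox{dim}(\Xi)=m\,\mbox{rank}(\mathbf{B})$. (If one prefers to avoid Kronecker products, the same count follows by observing that $\Phi$ acts column-by-column through the fixed $M\times K$ matrix $\mathbf{Q}\mathbf{B}$, whose column space has dimension $\mbox{rank}(\mathbf{B})$, and that there are $m$ independent columns, so the image has dimension $m\,\mbox{rank}(\mathbf{B})$.)

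For the inequality, since $\mathbf{B}\in\mathbb{R}^{M\times K}$ we have $\mbox{rank}(\mathbf{B})\le\min(M,K)$, and the hypothesis $K<M$ forces this to be $\le K$; hence $\mbox{dim}(\Xi)=m\,\mbox{rank}(\mathbf{B})\le mK$, as claimed.

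I do not expect a substantive obstacle here; the statement is essentially a linear-algebra bookkeeping exercise. The only points requiring care are (i) checking that $\mathbf{I}-\mathbf{P}$ is invertible so that $\mathbf{X}_\infty$ in the definition of $\Xi$ is unambiguous (this is exactly where $\rho(\mathbf{P})<1$ is used, and it is already implicit in Lemma~\ref{lem_conv}), and (ii) the translation between the matrix unknown $\mathbf{U}(0)\in\mathbb{R}^{K\times m}$ and its $Km$-dimensional vectorization, which is where the factor $m$ enters. The hypothesis $K<M$ is used only to replace $\min(M,K)$ by $K$ in the final bound.
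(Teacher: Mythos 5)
Your proposal is correct and follows essentially the same route as the paper, which simply invokes Lemma~\ref{lem_conv} to identify $\Xi$ as the range of $\mathbf{U}\mapsto(\mathbf{I}-\mathbf{P})^{-1}\mathbf{B}\mathbf{U}$ and Lemma~\ref{lem_rank} (with $r_{\mathbf{Q}}=M$) to get $\mbox{rank}((\mathbf{I}-\mathbf{P})^{-1}\mathbf{B})=\mbox{rank}(\mathbf{B})$. The only cosmetic difference is that you obtain this rank equality directly from the invertibility of $(\mathbf{I}-\mathbf{P})^{-1}$ rather than from the paper's product-rank bounds, and you spell out the vectorization step that produces the factor $m$, which the paper leaves implicit.
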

\begin{proof}
The proof follows From Lemma~\ref{lem_conv} and Lemma~\ref{lem_rank} in Appendix~\ref{app1}.
\end{proof}
Now, we formally define the dimension of the HDC.
\begin{definition}[Dimension]
\label{def:dimension} The dimension of the HDC algorithm is the dimension of the consensus subspace,~$\Xi$, normalized by~$m$, i.e.,
\begin{equation}
\mbox{dim}(\mbox{HDC}) = \dfrac{\mbox{dim}(\Xi)}{m} = \mbox{rank}(\mathbf{B}).
\end{equation}
\end{definition}
This definition is natural because the~HDC is a decoupled algorithm, i.e., HDC corresponds to~$m$ parallel algorithms, one for each column of~$\mathbf{X}(t)$.
So, the number of columns, $m$, in~$\mathbf{X}(t)$ is factored out in the definition of~$\mbox{dim}(\mbox{HDC})$. Each column of~$\mathbf{X}(t)$ lies in a subspace that is spanned by exactly~$\mbox{rank}(\mathbf{B})$ basis vectors. The number of these basis vectors is upper bounded by the number of anchors, i.e., is at most~$K$.

\subsection{Practical Applications of the HDC}
Several interesting problems can be framed in the context of HDC. We briefly sketch them below, for details, see \cite{usman_icassp:07,khankarmoura-sep08,usman_loc2:08,usman_icassp:09}.
\begin{itemize}
\item {\it Leader-follower algorithm} \cite{usman_icassp:09}: When there is only one anchor,~$K=1$, the sensors' states converge to the anchor state. With multiple anchors~$(K>1)$, under appropriate conditions, the sensors' states may be made to converge to a desired, pre-specified linear combination of the anchors' states.
\item {\it Sensor localization in~$m$-dimensional Euclidean spaces,~$\mathbb{R}^m$}: In~\cite{usman_loc2:08}, we choose the elements of the matrices~$\mathbf{P}=\{p_{lj}\}$ and~$\mathbf{B}=\{b_{lj}\}$ so that the sensor states converge to their exact locations when only,~$K=m+1$, anchors know their exact locations, for example, if equipped with a GPS.
\item {\it Jacobi algorithm for solving linear system of equations}, \cite{khankarmoura-sep08}: Linear systems of equations arise naturally in sensor networks, for example, power flow equations in power systems monitored by sensors or time synchronization algorithms in sensor networks. With appropriate choice of the matrices~$\mathbf{B}$ and~$\mathbf{P}$, it can be shown that the HDC algorithm~\eqref{gen_mod} is a distributed implementation of the Jacobi algorithm to solve the linear system.
\item {\it Distributed banded matrix inversion}: Algorithm~\eqref{gen_mod} followed by a non-linear collapse operator is employed in \cite{usman_icassp:07} to solve a banded matrix inversion problem, when the submatrices in the band are distributed among several sensors. This distributed inversion algorithm leads to distributed Kalman filters in sensor networks \cite{usman_tsp:07} using Gauss-Markov approximations by noting that the inverse of a Gauss-Markov covariance matrix is banded.
\end{itemize}

\subsection{Robustness of the HDC}\label{robu}
Robustness is key in the context of HDC, when the information exchange is subject to communication noise, packet drops, and imprecise knowledge of system parameters. In the context of sensor localization, we propose a modification to HDC in \cite{usman_loc2:08} along the lines of the Robbins-Monro algorithm \cite{kush_book} where the iterations are performed with a decreasing step-size sequence that satisfies a persistence condition. i.e., the step-sizes converge to zero but not too fast (this condition is well studied in the stochastic approximation literature, \cite{kush_book,Nevelson}). With such step-sizes, we show almost sure convergence of the sensor localization algorithm to their exact locations under broad random phenomenon, see~\cite{usman_loc2:08} for details. This modification is easily extended to the general class of HDC algorithms.

\section{Inverse Problem: Learning in Large-Scale Networks}\label{lp_liln}
As we briefly mentioned before, the inverse problem learns the parameter matrices ($\mathbf{B}$ and~$\mathbf{P}$) of the HDC such that HDC converges to a desired pre-specified state~\eqref{Weq}. For convergence, we require the spectral radius constraint~\eqref{spec_l}, and the matrices,~$\mathbf{B}$ and~$\mathbf{P}$, to follow the underlying communication network,~$\mathcal{G}$. In general, due to the spectral norm constraint and the sparseness (network) constraints, equation~\eqref{Weq} may not be met with equality. So, it is natural to relax the learning problem. Using Lemma~\ref{lem_conv} and~\eqref{Weq}, we restate the learning problem as follows.

Consider~$\varepsilon\in[0,1)$. Given an~$N$-node sensor network with a communication graph,~$\mathcal{G}$, and an~$M\times K$ weight matrix,~$\mathbf{W}$, solve the optimization problem:
\begin{eqnarray}
\label{s1}\inf_{\mathbf{B,P}}\|\left(\mathbf{I-P}\right)^{-1}\mathbf{B} &-& \mathbf{W}\|,\\
\mbox{subject to:}\qquad\label{s2}\mbox{Spectral radius constraint,}\qquad\rho\left(\mathbf{P}\right)&\leq&\varepsilon,\\
\label{s3}\mbox{Sparsity constraint,}\qquad\mathcal{G}^{\mathbf{\Upsilon}} &\subseteq& \mathcal{G},
\end{eqnarray}
for some induced matrix norm~$\|\cdot\|$. By Lemma~\ref{lem_conv}, if~$\rho\left(\mathbf{P}\right)\leq\varepsilon$, the convergence is exponential with exponent less than or equal to~$\ln(\varepsilon)$. Thus, we ask, given a pre-specified convergence rate,~$\ve$, what is the minimum error between the converged estimates,~$\lim_{t\rightarrow\infty}\mathbf{X}(t)$, and the desired estimates,~$\mathbf{WU}(0)$. Formulating the problem in this way naturally lends itself to a trade-off between the performance and the convergence rate.

In some cases, it may happen that the learning problem has an exact solution in the sense that there exist~$\mathbf{B,P}$, satisfying~\eqref{s2} and~\eqref{s3} such that the objective in~\eqref{s1} is~$0$. In case of multiple such solutions, we seek the one which corresponds to the fastest convergence, i.e., which leads to the smallest value of~$\rho(\mathbf{P})$. We may still formulate a performance versus convergence rate trade-off, if faster convergence is desired.

The learning problem stated as such is, in general, practically infeasible to solve because both~\eqref{s1} and~\eqref{s2} are non-convex in~$\mathbf{P}$. We now develop a more tractable framework for the learning problem in the following.

\subsection{Revisiting the spectral radius constraint~\eqref{s2}}\label{secbb}
We work with a convex relaxation of the spectral radius constraint. Recall that the spectral radius can be expressed as~\eqref{rPdef}. However, direct use of~\eqref{rPdef} as a constraint is, in general, not computationally feasible. Hence, instead of using the spectral radius constraint~\eqref{s2} we use a matrix induced norm constraint by realizing that
\begin{equation}\label{rPlb}
\rho(\mathbf{P}) \leq \|\mathbf{P}\|,
\end{equation}
for any matrix induced norm. The induced norm constraint, thus, becomes
\begin{equation}\label{neq_eq}
\left\|\mathbf{P}\right\|\leq\varepsilon.
\end{equation}
Clearly,~\eqref{rPlb} implies that any upper bound on~$\|\mathbf{P}\|$ is also an upper bound on~$\rho(\mathbf{P})$.

\subsection{Revisiting the sparsity constraint~\eqref{s3}}\label{seccc}
In this subsection, we rewrite the sparsity constraint~\eqref{s3} as a linear constraint in the design parameters,~$\mathbf{B}$ and~$\mathbf{P}$. The sparsity constraint ensures that the structure of the underlying communication network,~$\mathcal{G}$, is not violated. To this aim, we introduce an auxiliary variable,~$\mathbf{F}$, defined as
\begin{eqnarray}\label{min_not1}
\mathbf{F} &\triangleq& [\mathbf{B}~|~\mathbf{P}]\in\mathbb{R}^{M\times N}.
\end{eqnarray}
This auxiliary variable,~$\mathbf{F}$, combines the matrices~$\mathbf{B}$ and~$\mathbf{P}$ as they correspond to the adjacency matrix,~$\mathbf{A}(\mathcal{G})$, of the given communication graph,~$\mathcal{G}$, see the comments after~\eqref{Ceq}.

To translate the sparsity constraint into linear constraints on~$\mathbf{F}$ (and, thus, on~$\mathbf{B}$ and~$\mathbf{P}$), we employ a two-step procedure: (i)~First, we identify the elements in the adjacency matrix,~$\mathbf{A}(\mathcal{G})$, that are zero; these elements correspond to the pairs of nodes in the network where we do not have a communication link. (ii)~We then force the elements of~$\mathbf{F} = [\mathbf{B}~|~\mathbf{P}]$ corresponding to zeros in the adjacency matrix,~$\mathbf{A}(\mathcal{G})$, to be zero. Mathematically, (i) and (ii) can be described as follows.

(i)~Let the lower~$M\times N$ submatrix of the~$N\times N$ adjacency matrix,~$\mathbf{A}=\{a_{lj}\}$ (this lower part corresponds to~$\mathbf{F} = [\mathbf{P}~|~\mathbf{B}]$ as can be noted from~\eqref{Ceq}), be denoted by~$\underline{\mathbf{A}}$, i.e,
\begin{equation}\label{Abar}
\underline{\mathbf{A}}=\{\underline{a}_{ij}\}=\{a_{lj}\},\qquad l=K+1,\ldots,N,~~j=1,\ldots N,~~i=1,\ldots,M.
\end{equation}
Let~$\chi$ contain all pairs~$(i,j)$ for which~$\underline{a}_{ij}=0$.

(ii)~Let~$\{\mathbf{e}_i\}_{i=1,\ldots,M}$ be a family of~$1\times M$ row-vectors such that~$\mathbf{e}_i$ has a~$1$ as the~$i$th element and zeros everywhere else. Similarly, let~$\{\mathbf{e}^j\}_{j=1,\ldots,N}$ be a family of~$N\times 1$, column-vectors such that~$\mathbf{e}^j$ has a~$1$ as the~$j$th element and zeros everywhere else. With this notation, the~$ij$-th element,~$f_{ij}$, of~$\mathbf{F}$ can be written as
\begin{equation}
f_{ij} = \mathbf{e}_i\mathbf{Fe}^j.
\end{equation}

The sparsity constraint~\eqref{s3} is explicitly given by
\begin{equation}\label{sp_F}
\mathbf{e}_i\mathbf{Fe}^j=0,\qquad\forall~(i,j)\in\chi.
\end{equation}

\subsection{Feasible solutions}\label{fs_sec}
Consider~$\ve\in[0,1)$. We now define a set of  matrices,~$\mathcal{F}_{\leq\ve}\subseteq\mathbb{R}^{M\times N}$, that follow both the induced norm constraint~\eqref{neq_eq} and the sparsity constraint~\eqref{sp_F} of the learning problem. The set of feasible solutions is given by
\begin{equation}
\mathcal{F}_{\leq\ve} = \{\mathbf{F}_{\leq\ve}=[\mathbf{B}~|~\mathbf{P}]~|~\mathbf{e}_i\mathbf{Fe}^j=0,~\forall~(i,j)\in\chi,\mbox{ and }\|\mathbf{FT}\|\leq\varepsilon\},
\end{equation}
where
\begin{eqnarray}\label{min_not3}
\mathbf{T} &\triangleq&
\begin{bmatrix}
\mathbf{0}_{K\times M}\\
\mathbf{I}_M
\end{bmatrix}\in\mathbb{R}^{N\times M}.
\end{eqnarray}
With the matrix~$\mathbf{T}$ defined as above, we note that
\begin{equation}
\mathbf{P} = \mathbf{FT}.
\end{equation}

\begin{lem}
The set of feasible solutions,~$\mathcal{F}_{\leq\ve}$, is convex.
\end{lem}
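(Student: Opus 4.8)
The plan is to exhibit $\mathcal{F}_{\leq\ve}$ as the intersection of two convex sets and then invoke the fact that an intersection of convex sets is convex. Concretely, write $\mathcal{F}_{\leq\ve} = \mathcal{L} \cap \mathcal{B}$, where $\mathcal{L} = \{\mathbf{F}\in\mathbb{R}^{M\times N}~|~\mathbf{e}_i\mathbf{F}\mathbf{e}^j = 0,~\forall(i,j)\in\chi\}$ encodes the sparsity constraint~\eqref{sp_F} and $\mathcal{B} = \{\mathbf{F}\in\mathbb{R}^{M\times N}~|~\|\mathbf{F}\mathbf{T}\|\leq\varepsilon\}$ encodes the induced-norm constraint~\eqref{neq_eq} (using $\mathbf{P}=\mathbf{F}\mathbf{T}$). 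I would establish convexity of each piece separately and conclude.

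For $\mathcal{L}$: each condition $\mathbf{e}_i\mathbf{F}\mathbf{e}^j = 0$ is linear in the entries of $\mathbf{F}$, so $\mathcal{L}$ is the solution set of a homogeneous linear system, hence a linear subspace of $\mathbb{R}^{M\times N}$ and in particular convex. For $\mathcal{B}$: the map $\mathbf{F}\mapsto\mathbf{F}\mathbf{T}$ is linear, and $\mathbf{Z}\mapsto\|\mathbf{Z}\|$ is convex (indeed it satisfies the triangle inequality and positive homogeneity as a norm), so $\mathbf{F}\mapsto\|\mathbf{F}\mathbf{T}\|$ is a convex function on $\mathbb{R}^{M\times N}$; the set $\mathcal{B}$ is its $\varepsilon$-sublevel set, and sublevel sets of convex functions are convex. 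Spelled out directly: for $\mathbf{F}_1,\mathbf{F}_2\in\mathcal{F}_{\leq\ve}$ and $\theta\in[0,1]$, the combination $\mathbf{F}_\theta = \theta\mathbf{F}_1 + (1-\theta)\mathbf{F}_2$ satisfies $\mathbf{e}_i\mathbf{F}_\theta\mathbf{e}^j = \theta(\mathbf{e}_i\mathbf{F}_1\mathbf{e}^j) + (1-\theta)(\mathbf{e}_i\mathbf{F}_2\mathbf{e}^j) = 0$ for all $(i,j)\in\chi$, and $\|\mathbf{F}_\theta\mathbf{T}\| = \|\theta\mathbf{F}_1\mathbf{T} + (1-\theta)\mathbf{F}_2\mathbf{T}\| \leq \theta\|\mathbf{F}_1\mathbf{T}\| + (1-\theta)\|\mathbf{F}_2\mathbf{T}\| \leq \theta\varepsilon + (1-\theta)\varepsilon = \varepsilon$, so $\mathbf{F}_\theta\in\mathcal{F}_{\leq\ve}$.

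There is no real obstacle here: the only structural facts used are linearity of the two constraint maps, the triangle inequality and homogeneity of the induced norm, and the elementary permanence of convexity under intersection. If desired, I would also remark in passing that $\mathcal{F}_{\leq\ve}$ is nonempty (e.g.\ $\mathbf{F}=\mathbf{0}$ lies in it), so the statement is not vacuous, although nonemptiness is not needed for convexity per se. The argument is short and I would present it essentially as written above.
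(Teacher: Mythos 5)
Your proof is correct and follows essentially the same route as the paper: the direct computation you spell out at the end (linearity of the sparsity constraints under convex combination, plus the triangle inequality and homogeneity of the induced norm applied to $\|\mathbf{F}_\theta\mathbf{T}\|$) is exactly the paper's argument. The preliminary framing as an intersection of a linear subspace with a norm sublevel set is a tidy, slightly more structural way to package the same facts, but it is not a different proof.
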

\begin{proof}
Let~$\mathbf{F}_1,\mathbf{F}_2 \in \mathcal{F}_{\leq\ve}$, then
\begin{align}
\mathbf{e}_i\mathbf{F}_1\mathbf{e}^j&=0,~\forall~(i,j)\in\chi,&
\mathbf{e}_i\mathbf{F}_2\mathbf{e}^j&=0,~\forall~(i,j)\in\chi.&
\end{align}
For any~$0\leq\mu\leq 1$, and~$\forall~(i,j)\in\chi$,
\begin{eqnarray}\nonumber
\mathbf{e}_i\left(\mu\mathbf{F}_1+(1-\mu)\mathbf{F}_2\right)\mathbf{e}^j=\mu\mathbf{e}_i\mathbf{F}_1\mathbf{e}^j+(1-\mu)\mu\mathbf{e}_i\mathbf{F}_2\mathbf{e}^j=0.
\end{eqnarray}
Similarly,
\begin{eqnarray}\nonumber
\|\left(\mu\mathbf{F}_1+(1-\mu)\mathbf{F}_2\right)\mathbf{T}\|\leq\mu\|\mathbf{F}_1\mathbf{T}\|+(1-\mu)\|\mathbf{F}_2\mathbf{T}\|\leq\mu\varepsilon+(1-\mu)\varepsilon=\ve.
\end{eqnarray}
The first inequality uses the triangle inequality for matrix induced norms and the second uses the fact that, for~$i=1,2$,~$\mathbf{F}_i\in\mathcal{F}$ and~$\|\mathbf{F}_i\mathbf{T}\|\leq\varepsilon$.

Thus,~$\mathbf{F}_1,\mathbf{F}_2\in\mathcal{F}_{\leq\ve}\Rightarrow\mu\mathbf{F}_1+(1-\mu)\mathbf{F}_2 \in\mathcal{F}_{\leq\ve}$. Hence,~$\mathcal{F}_{\leq\ve}$ is convex.
\end{proof}
Similarly, we note that the sets,~$\mathcal{F}_{<\ve}$ and~$\mathcal{F}_{<1}$, are also convex.

\subsection{Learning Problem: An upper bound on the objective}\label{secdd}
In this section, we simplify the objective function~\eqref{s1} and give a tractable upper bound. We have the following proposition.
\begin{proposition}
Under the norm constraint $
%\begin{equation}\label{lllll}
\|\mathbf{P}\|<1,
%\end{equation}
$ then
\begin{equation}
\left\|\left(\mathbf{I-P}\right)^{-1}\mathbf{B} - \mathbf{W}\right\|
\leq\dfrac{1}{1-\left\|\mathbf{P}\right\|}\left\|\mathbf{B+PW-W}\right\|
\end{equation}
\end{proposition}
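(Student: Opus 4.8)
The plan is to reduce everything to the Neumann series for $(\mathbf{I}-\mathbf{P})^{-1}$, which converges because $\|\mathbf{P}\|<1$. First I would write $(\mathbf{I}-\mathbf{P})^{-1}\mathbf{B} - \mathbf{W} = (\mathbf{I}-\mathbf{P})^{-1}\mathbf{B} - (\mathbf{I}-\mathbf{P})^{-1}(\mathbf{I}-\mathbf{P})\mathbf{W} = (\mathbf{I}-\mathbf{P})^{-1}\left(\mathbf{B} - (\mathbf{I}-\mathbf{P})\mathbf{W}\right) = (\mathbf{I}-\mathbf{P})^{-1}(\mathbf{B}+\mathbf{P}\mathbf{W}-\mathbf{W})$, so the left-hand side factors cleanly into the resolvent times the residual matrix $\mathbf{B}+\mathbf{P}\mathbf{W}-\mathbf{W}$ that appears on the right.

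Next I would apply submultiplicativity of the induced norm to get $\left\|(\mathbf{I}-\mathbf{P})^{-1}(\mathbf{B}+\mathbf{P}\mathbf{W}-\mathbf{W})\right\| \leq \left\|(\mathbf{I}-\mathbf{P})^{-1}\right\|\,\left\|\mathbf{B}+\mathbf{P}\mathbf{W}-\mathbf{W}\right\|$. It then remains to bound $\left\|(\mathbf{I}-\mathbf{P})^{-1}\right\|$. Since $\|\mathbf{P}\|<1$, the Neumann series $(\mathbf{I}-\mathbf{P})^{-1}=\sum_{k=0}^{\infty}\mathbf{P}^k$ converges in the induced norm; applying the triangle inequality and submultiplicativity termwise gives $\left\|(\mathbf{I}-\mathbf{P})^{-1}\right\| \leq \sum_{k=0}^{\infty}\|\mathbf{P}\|^k = \frac{1}{1-\|\mathbf{P}\|}$ as a convergent geometric series. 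Combining the two bounds yields the claimed inequality.

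There is essentially no serious obstacle here — the statement is a routine resolvent estimate. The only points that need a word of care are: (a) checking that $\mathbf{I}-\mathbf{P}$ is invertible, which follows from $\|\mathbf{P}\|<1 \Rightarrow \rho(\mathbf{P})\leq\|\mathbf{P}\|<1$ (cf. \eqref{rPlb}), so $1$ is not an eigenvalue; and (b) justifying the termwise norm bound on the infinite sum, which is legitimate because the partial sums converge in norm and $\|\cdot\|$ is continuous. Everything else is algebraic manipulation of the identity $\mathbf{W}=(\mathbf{I}-\mathbf{P})^{-1}(\mathbf{I}-\mathbf{P})\mathbf{W}$ plus the two standard norm inequalities.
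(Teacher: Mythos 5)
Your proof is correct and follows essentially the same route as the paper's: factor the difference as $(\mathbf{I}-\mathbf{P})^{-1}(\mathbf{B}+\mathbf{P}\mathbf{W}-\mathbf{W})$, apply submultiplicativity, and bound $\left\|(\mathbf{I}-\mathbf{P})^{-1}\right\|$ via the Neumann series and the geometric sum $\sum_k \|\mathbf{P}\|^k = 1/(1-\|\mathbf{P}\|)$. Your added remarks on invertibility of $\mathbf{I}-\mathbf{P}$ and on justifying the termwise bound are sound and, if anything, slightly more careful than the paper's own treatment.
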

\begin{proof}
We manipulate~\eqref{s1} to obtain successively.
\begin{eqnarray}\nonumber
\left\|\left(\mathbf{I-P}\right)^{-1}\mathbf{B} - \mathbf{W}\right\| &=& \left\|\left(\mathbf{I-P}\right)^{-1}\left(\mathbf{B} - \left(\mathbf{I-P}\right)\mathbf{W}\right)\right\|,\\\nonumber
&\leq& \left\|\left(\mathbf{I-P}\right)^{-1}\right\|\left\|\left(\mathbf{B} - \left(\mathbf{I-P}\right)\mathbf{W}\right)\right\|,\\\nonumber
&=& \left\|\sum_k\mathbf{P}^k\right\|\left\|\left(\mathbf{B} - \left(\mathbf{I-P}\right)\mathbf{W}\right)\right\|,\\\nonumber
&\leq& \sum_k\|\mathbf{P}\|^k\left\|\left(\mathbf{B} - \left(\mathbf{I-P}\right)\mathbf{W}\right)\right\|,\\\label{s1ub}
&\leq&\dfrac{1}{1-\left\|\mathbf{P}\right\|}\left\|\mathbf{B+PW-W}\right\|.
\end{eqnarray}
To go from the second equation to the third, we use~\eqref{lem2_eq2} from Lemma~\ref{lem1} in Appendix~\ref{app1}. Lemma~\ref{lem1} is applicable here since~\eqref{rPlb} and given the norm constraint $\|\mathbf{P}\|<1$
%~\eqref{lllll}
 imply~$\rho(\mathbf{P})<1$. The last step is the sum of a geometric series which converges given $\|\mathbf{P}\|<1$. %~\eqref{lllll}.
\end{proof}
We now define the utility function,~$u(\mathbf{B,P})$, that we minimize instead of minimizing~$\|\left(\mathbf{I-P}\right)^{-1}\mathbf{B} - \mathbf{W}\|$. This is valid because~$\util$ is an upper bound on~\eqref{s1} and hence minimizing the upper bound leads to a performance guarantee. The utility function is
\begin{equation}\label{ut_fn}
u(\mathbf{B,P}) = \dfrac{1}{1-\left\|\mathbf{P}\right\|}\left\|\mathbf{B+PW-W}\right\|.
\end{equation}
With the help of the previous development, we now formally present the {\it Learning Problem}.
\begin{center}
    \begin{tabular}{ | p{6.5in} |}
    \hline
 {\it Learning Problem:} Given~$\ve\in[0,1)$, an~$N$-node sensor network with a sparse communication graph, $\mathcal{G}$, and a possibly full~$M\times K$ weight matrix, $\mathbf{W}$, design the matrices~$\mathbf{B}$ and~$\mathbf{P}$ (in~\eqref{gen_mod}) that minimize~\eqref{ut_fn}, i.e., solve the optimization problem
\begin{eqnarray}\label{lp_eq}
\label{c1}\inf_{[\mathbf{B}~|~\mathbf{P}]\in\mathcal{F}_{\leq\varepsilon}}
&~&u(\mathbf{B,P}).
\end{eqnarray}\\
    \hline
    \end{tabular}
\end{center}
Note that the induced norm constraint~\eqref{neq_eq} and the sparsity constraint~\eqref{sp_F} are implicit in~\eqref{c1}, as they appear in the set of feasible solutions,~$\mathcal{F}_{\leq\ve}$. Furthermore, the optimization problem in~\eqref{lp_eq} is equivalent to the following problem.
\begin{eqnarray}
\label{lp_eq2}\inf_{[\mathbf{B}~|~\mathbf{P}]\in\mathcal{F}_{\leq\varepsilon}\cap\{\|\mathbf{B}\|\leq b\}}&~&u(\mathbf{B,P}),
\end{eqnarray}
where~$b>0$ is a sufficiently large number. Since~\eqref{lp_eq2} involves the infimum of a continuous function,~$\util$, over a compact set,~$\mathcal{F}_{\varepsilon}\cap\{\|\mathbf{B}\|\leq b\}$, the infimum is attainable and, hence, in the subsequent development, we replace the infimum in~\eqref{lp_eq} by a minimum.

We view the $\min\util$ as the minimization of its two factors,~$1/(1-\|\mathbf{P}\|)$ and~$\|\mathbf{B+PW-W}\|$. In general, we need~$\|\mathbf{P}\|\rightarrow 0$ to minimize the first factor,~$1/(1-\|\mathbf{P}\|)$, and~$\|\mathbf{P}\|\rightarrow 1$ to minimize the second factor,~$\|\mathbf{B+PW-W}\|$ (we explicitly prove this statement later.) Hence, these two objectives are conflicting. Since, the minimization of the {\it non-convex} utility function,~$u(\mathbf{B},\mathbf{P})$, contains minimizing two coupled {\it convex} objective functions,~$\|\mathbf{P}\|$ and~$\|\mathbf{B+PW-W}\|$, we formulate this minimization as a multi-objective optimization problem (MOP). In the MOP, we consider separately minimizing these two convex functions. We then couple the MOP solutions using the utility function.

\subsection{Solution to the Learning Problem: MOP formulation}\label{sol_mop}
To solve the {\it Learning Problem} for every~$\varepsilon\in[0,1)$, we cast it in the context of a multi-objective optimization problem (MOP). We start by a rigorous definition of the MOP and later consider its equivalence to the {\it Learning Problem}. In the MOP formulation, we treat~$\|\mathbf{B+PW-W}\|$ as the first objective function,~$f_1$, and ~$\|\mathbf{P}\|$ as the second objective function,~$f_2$. The objective vector,~$\mathbf{f}(\mathbf{B,P})$, is
\begin{eqnarray}\label{vec_mop}
\mathbf{f}(\mathbf{B,P}) \triangleq
\left[
\begin{array}{c}
f_1(\mathbf{B,P})\\
f_2(\mathbf{B,P})
\end{array}
\right]
=
\left[
\begin{array}{c}
\|\mathbf{B+PW-W}\|\\
\|\mathbf{P}\|
\end{array}
\right].
\end{eqnarray}
The multi-objective optimization problem (MOP) is given by
\begin{eqnarray}\label{mop_eq}
\min_{[\mathbf{B}~|~\mathbf{P}]\in\mathcal{F}_{\leq1}} \mathbf{f}(\mathbf{B,P}),
\end{eqnarray}
where\footnote{Although the {\it Learning Problem} is valid only when~$\|\mathbf{P}\|<1$, the MOP is defined at~$\|\mathbf{P}\|=1$. Hence, we consider~$\|\mathbf{P}\|\leq1$ when we seek the MOP solutions.}
\begin{equation}
\mathcal{F}_{\leq1} = \{\mathbf{F}=[\mathbf{B}~|~\mathbf{P}]~:~\mathbf{e}_i\mathbf{Fe}^j=0,~\forall~(i,j)\in\chi,\mbox{ and }\|\mathbf{FT}\|\leq1\}.
\end{equation}

Before providing one of the main results of this paper on the equivalence of MOP and the {\it Learning Problem}, we set the following notation. We define
\begin{equation}
\varepsilon_{\mbox{exact}} = \min\{\|\mathbf{P}\|~|~(\mathbf{I-P})^{-1}\mathbf{B=W}, [\mathbf{B}~|\mathbf{P}]\in\mathcal{F}_{<1}\},
\end{equation}
where the minimum of an empty set is taken to be~$+\infty$. In other words,~$\varepsilon_{\mbox{exact}}$ is the minimum value of~$f_2=\|\mathbf{P}\|$ at which we may achieve an exact solution\footnote{An exact solution is given by~$[\mathbf{B}~|~\mathbf{P}]\in\mathcal{F}$ such that~$(\mathbf{I-P})^{-1}\mathbf{B=W}$ or when the infimum in~\eqref{s1} is attainable and is~$0$.} of the {\it Learning Problem}. A necessary condition for the existence of an exact solution is studied in Appendix~\ref{nc1}. If the exact solution is infeasible ($\notin\mathcal{F}_{<1}$), then~$\varepsilon_{\mbox{exact}}=\min\{\varnothing\}$, which we defined to be~$+\infty$. We let
\begin{equation}
\mathcal{E} = [0,1) \cap [0,\varepsilon_{\mbox{exact}}].
\end{equation}
The {\it Learning Problem} is interesting if~$\varepsilon\in\mathcal{E}$. We now study the relationship between the MOP and the \emph{Learning Problem}~\eqref{lp_eq}. Recall the notion of Pareto-optimal solutions of an MOP as discussed in Section~\ref{mop_sec}. We have the following theorem.
\begin{theorem}\label{main1}
Let~$\mathbf{B}_\varepsilon,\mathbf{P}_\varepsilon$, be an optimal solution of the {\it Learning Problem}, where~$\varepsilon\in\mathcal{E}$. Then,~$\mathbf{B}_\varepsilon,\mathbf{P}_\varepsilon$ is a Pareto-optimal solution of the MOP~\eqref{mop_eq}.
\end{theorem}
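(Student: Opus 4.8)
The plan is to argue by contradiction, exploiting the fact that the utility function $u(\mathbf{B},\mathbf{P}) = \tfrac{1}{1-f_2}f_1$ is strictly increasing in each of the two objectives $f_1 = \|\mathbf{B}+\mathbf{P}\mathbf{W}-\mathbf{W}\|$ and $f_2 = \|\mathbf{P}\|$, provided we stay in the regime $f_2 < 1$. Suppose $\mathbf{B}_\varepsilon,\mathbf{P}_\varepsilon$ solves the \emph{Learning Problem} for some $\varepsilon\in\mathcal{E}$ but is \emph{not} Pareto-optimal for the MOP~\eqref{mop_eq}. By Definition~\ref{po_def} there is a feasible $[\mathbf{B}~|~\mathbf{P}]\in\mathcal{F}_{\leq1}$ with $f_k(\mathbf{B},\mathbf{P})\leq f_k(\mathbf{B}_\varepsilon,\mathbf{P}_\varepsilon)$ for $k=1,2$ and strict inequality for at least one $k$. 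The first thing to check is that this dominating point actually lies in $\mathcal{F}_{\leq\varepsilon}$, i.e.\ that it is feasible for the \emph{Learning Problem}: since $f_2(\mathbf{B},\mathbf{P}) = \|\mathbf{P}\|\leq \|\mathbf{P}_\varepsilon\|\leq\varepsilon$, the norm constraint~\eqref{neq_eq} holds, and the sparsity constraint~\eqref{sp_F} is common to both $\mathcal{F}_{\leq1}$ and $\mathcal{F}_{\leq\varepsilon}$, so indeed $[\mathbf{B}~|~\mathbf{P}]\in\mathcal{F}_{\leq\varepsilon}$.

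Next I would establish strict monotonicity of $u$ and derive the contradiction. Since $\varepsilon\in\mathcal{E}\subseteq[0,1)$, we have $\|\mathbf{P}_\varepsilon\| \leq \varepsilon < 1$ and likewise $\|\mathbf{P}\|<1$, so the factor $g(f_2) := 1/(1-f_2)$ is well-defined, positive, and strictly increasing on $[0,1)$. Writing $u = g(f_2)\cdot f_1$ with $f_1\geq 0$ and $g(f_2)>0$, one checks the two cases: if $f_1(\mathbf{B},\mathbf{P}) < f_1(\mathbf{B}_\varepsilon,\mathbf{P}_\varepsilon)$ and $f_2(\mathbf{B},\mathbf{P})\leq f_2(\mathbf{B}_\varepsilon,\mathbf{P}_\varepsilon)$, then (excluding the degenerate subcase $f_1(\mathbf{B}_\varepsilon,\mathbf{P}_\varepsilon)=0$, which I handle separately below) $u(\mathbf{B},\mathbf{P}) < u(\mathbf{B}_\varepsilon,\mathbf{P}_\varepsilon)$; symmetrically if $f_2$ strictly decreases and $f_1$ weakly decreases, then again $u$ strictly decreases unless $f_1(\mathbf{B}_\varepsilon,\mathbf{P}_\varepsilon)=0$. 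In either nondegenerate case we have produced a feasible point of the \emph{Learning Problem} with strictly smaller objective, contradicting optimality of $\mathbf{B}_\varepsilon,\mathbf{P}_\varepsilon$.

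The remaining obstacle — and the step I expect to need the most care — is the degenerate case $f_1(\mathbf{B}_\varepsilon,\mathbf{P}_\varepsilon)=\|\mathbf{B}_\varepsilon+\mathbf{P}_\varepsilon\mathbf{W}-\mathbf{W}\|=0$, i.e.\ when $u(\mathbf{B}_\varepsilon,\mathbf{P}_\varepsilon)=0$ and the \emph{Learning Problem} attains an exact solution. Here $u$ cannot strictly decrease (it is already at its minimum value $0$), so dominance in $f_1$ alone gives no contradiction; one must instead use the dominance in $f_2$. This is exactly where the hypothesis $\varepsilon\leq\varepsilon_{\mathrm{exact}}$ (built into $\mathcal{E}$) is needed: since $f_1(\mathbf{B}_\varepsilon,\mathbf{P}_\varepsilon)=0$ means $(\mathbf{I}-\mathbf{P}_\varepsilon)^{-1}\mathbf{B}_\varepsilon=\mathbf{W}$, the value $\|\mathbf{P}_\varepsilon\|$ is a candidate in the set defining $\varepsilon_{\mathrm{exact}}$, hence $\varepsilon_{\mathrm{exact}}\leq\|\mathbf{P}_\varepsilon\|\leq\varepsilon\leq\varepsilon_{\mathrm{exact}}$, forcing $\|\mathbf{P}_\varepsilon\|=\varepsilon_{\mathrm{exact}}$; combined with the tie-breaking stipulation in the \emph{Learning Problem} (among exact solutions, pick the one of smallest $\|\mathbf{P}\|$), a dominating point with $f_1=0$ and strictly smaller $f_2$ would contradict minimality of $\|\mathbf{P}_\varepsilon\|$, while a dominating point with $f_1=0$ and equal $f_2$ is not strictly dominating. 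If the dominating point has $f_1>0$ but $f_2$ strictly smaller, then $u$ at that point is still $>0 = u(\mathbf{B}_\varepsilon,\mathbf{P}_\varepsilon)$, so it does not violate \emph{Learning Problem}-optimality directly; in that situation one notes that because $f_1$ is continuous and $f_1(\mathbf{B}_\varepsilon,\mathbf{P}_\varepsilon)=0$, no such Pareto-dominating point can exist without also having $f_1\leq 0$, hence $f_1=0$, returning us to the previous subcase. Assembling these subcases closes the argument.
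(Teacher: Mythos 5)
Your proof is correct, and it reaches the conclusion by a more self-contained route than the paper. The paper also argues by contradiction, but instead of working with the dominating point directly it invokes the $\varepsilon$-constraint machinery: writing $\varepsilon'=\|\mathbf{P}_\varepsilon\|$, it uses Lemma~\ref{lem_def} and Corollary~\ref{cor_dec} to produce a Pareto-optimal pair $\mathbf{B}^\ast,\mathbf{P}^\ast$ solving $P_1(\varepsilon')$ with $\|\mathbf{P}^\ast\|=\varepsilon'$ and $\delta(\varepsilon')<\delta'$, and then compares utilities with the \emph{same} denominator, $u(\mathbf{B}^\ast,\mathbf{P}^\ast)=\delta(\varepsilon')/(1-\varepsilon')<\delta'/(1-\varepsilon')=u(\mathbf{B}_\varepsilon,\mathbf{P}_\varepsilon)$. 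You instead feed the Pareto-dominating point itself into $u$ and exploit that $u=f_1/(1-f_2)$ is (weakly) increasing in each argument and strictly so in the relevant component; this avoids any appeal to the properties of the Pareto front function $\delta(\cdot)$, at the cost of an explicit case split at $f_1(\mathbf{B}_\varepsilon,\mathbf{P}_\varepsilon)=0$. That degenerate case is the one place where the paper's argument is glossed (its chain would read $\delta(\varepsilon')<\delta'=0$, an absurdity that still closes the contradiction but is not remarked upon), and your treatment of it via $\varepsilon_{\mbox{exact}}$ is the right one --- note only that you do not actually need the informal tie-breaking stipulation of the \emph{Learning Problem}: a dominating point with $f_1=0$ and $f_2<\|\mathbf{P}_\varepsilon\|=\varepsilon_{\mbox{exact}}$ already contradicts the definition of $\varepsilon_{\mbox{exact}}$ as a minimum over exact solutions in $\mathcal{F}_{<1}$. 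Your verification that the dominating point lies in $\mathcal{F}_{\leq\varepsilon}$ (hence that $\|\mathbf{P}\|<1$ and $u$ is defined there) is exactly the feasibility step the paper also needs and states.
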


The proof relies on analytical properties of the MOP (discussed in Section~\ref{des}) and is deferred until Section~\ref{proof_th}. We discuss here the consequences of Theorem~\ref{main1}. Theorem~\ref{main1} says that the optimal solutions to the {\it Learning Problem} can be obtained from the Pareto-optimal solutions of the MOP. In particular, it suffices to generate the Pareto front (collection of Pareto-optimal solutions of the MOP) for the MOP and seek the solutions to the {\it Learning Problem} from the Pareto front. The subsequent Section is devoted to constructing the Pareto front for the MOP and studying the properties of the Pareto front.
%
%
%, we provide a sketch to our approach in the following.
%
%\begin{enumerate}[(i)]
%\item In Section~\ref{des}, we solve the MOP as an~$\varepsilon$-constraint problem.
%
%\item In Section~\ref{po_sols}, we prove that the solutions of the MOP are Pareto-optimal.
%
%\item In Section~\ref{pf_props}, we create a Pareto front that consists of all Pareto-optimal solutions and explore some properties of the Pareto front in the context of our learning problem.
%
%\item In Section~\ref{mo_ufn}, we impose the minimization of the utility function,~$u(\mathbf{B},\mathbf{P})$, on the Pareto front to get a Pareto optimal solution that gives us the lowest cost of~$u(\mathbf{B},\mathbf{P})$.
%\end{enumerate}
%In terms of the optimization,~$\varepsilon_{\mbox{exact}}~$ is given by
%\begin{equation}\label{abc_abc}
%\varepsilon_{\mbox{exact}} = \mbox{argmin}_{[\mathbf{B}~|~\mathbf{P}]\in\mathcal{F}} P_2(0).
%\end{equation}

\section{Multi-objective Optimization: Pareto Front}\label{des}
We consider the MOP~\eqref{mop_eq} as an~$\varepsilon$-constraint problem, denoted by~$P_k(\varepsilon)$ \cite{par_book}. For a two-objective optimization,~$n=2$, we denote the~$\varepsilon$-constraint problem as~$P_1(\varepsilon_2)$ or~$P_2(\varepsilon_1)$, where~$P_1(\varepsilon_2)$ is given by\footnote{All the infima can be replaced by minima in a similar way as justified in Section~\ref{secdd}. Further note that, for technical convenience, we use~$\mathcal{F}_{\leq1}$ and not~$\mathcal{F}_{<1}$, which is permissible because the MOP objectives are defined for all values of~$\|\mathbf{P}\|$.}
\begin{eqnarray}\label{meq1}
\min_{[\mathbf{B}~|~\mathbf{P}]\in\mathcal{F}_{\leq1}} f_1(\mathbf{B,P}) \qquad\mbox{subject to}\qquad f_2(\mathbf{B,P}) \leq \varepsilon_2.
\end{eqnarray}
and~$P_2(\ve_1)$ is given by
\begin{eqnarray}\label{meq12}
\min_{[\mathbf{B}~|~\mathbf{P}]\in\mathcal{F}_{\leq1}} f_2(\mathbf{B,P}) \qquad\mbox{subject to}\qquad f_1(\mathbf{B,P}) \leq \varepsilon_1.
\end{eqnarray}
In both~$P_1(\varepsilon_2)$ and~$P_2(\varepsilon_1)$, we are minimizing a real-valued convex function, subject to a constraint on the real-valued convex function over a convex feasible set. Hence, either optimization can be solved using a convex program~\cite{Boyd-CVXBook}. We can now write~$\ve_{\mbox{exact}}$ in terms of~$P_2(\varepsilon_1)$ as
\begin{equation}
\ve_{\mbox{exact}} =
\left\{
\begin{array}{cc}
P_2(0), & \mbox{if there exists a solution to }P_2(0),\\
+\infty,&\mbox{otherwise}.
\end{array}
\right.
\end{equation}

Using~$P_1(\varepsilon_2)$, we find the Pareto-optimal set of solutions of the MOP. We explore this in Section~\ref{po_sols}. The collection of the values of the functions,~$f_1$ and~$f_2$, at the Pareto-optimal solutions forms the Pareto front (formally defined in Section~\ref{pf_props}). We explore properties of the Pareto front, in the context of our learning problem, in Section~\ref{pf_props}. These properties will be useful in addressing the minimization in~\eqref{lp_eq} for solving the {\it Learning Problem}.

\subsection{Pareto-Optimal Solutions}\label{po_sols}
In general, obtaining Pareto-optimal solutions requires iteratively solving~$\varepsilon$-constraint problems \cite{par_book}, but we will show that the optimization problem,~$P_1(\varepsilon_2)$, results directly into a Pareto-optimal solution. To do this, we provide Lemma~\ref{lem_def} and its Corollary~\ref{cor_dec} in the following. Based on these, we then state the Pareto-optimality of the solutions of~$P_1(\varepsilon_2)$ in Theorem~\ref{thm_po}.

\begin{lem}\label{lem_def}
Let
\begin{eqnarray}\label{meq112}
[\mathbf{B}_0~|~\mathbf{P}_0]&=&\mbox{argmin}_{[\mathbf{B}~|~\mathbf{P}]\in\mathcal{F}_{\leq1}} P_1(\varepsilon_0).
\end{eqnarray}
If~$\varepsilon_0\in\mathcal{E}$, then the minimum of the optimization,~$P_1(\varepsilon_0)$, is attained at~$\varepsilon_0$, i.e.,
\begin{equation}
f_2(\mathbf{B}_0,\mathbf{P}_0) = \varepsilon_0.
\end{equation}
\end{lem}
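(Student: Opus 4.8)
The plan is to argue by contradiction. Suppose the minimizer $[\mathbf{B}_0~|~\mathbf{P}_0]$ of $P_1(\varepsilon_0)$ has a slack norm constraint, $f_2(\mathbf{B}_0,\mathbf{P}_0)=\|\mathbf{P}_0\|<\varepsilon_0$. The first step is to show this forces $f_1(\mathbf{B}_0,\mathbf{P}_0)>0$. Indeed, if $f_1(\mathbf{B}_0,\mathbf{P}_0)=\|\mathbf{B}_0+\mathbf{P}_0\mathbf{W}-\mathbf{W}\|=0$, then $\mathbf{B}_0=(\mathbf{I}-\mathbf{P}_0)\mathbf{W}$; since $\varepsilon_0\in\mathcal{E}\subseteq[0,1)$ we have $\rho(\mathbf{P}_0)\le\|\mathbf{P}_0\|<\varepsilon_0<1$, so $\mathbf{I}-\mathbf{P}_0$ is invertible, $(\mathbf{I}-\mathbf{P}_0)^{-1}\mathbf{B}_0=\mathbf{W}$, and $[\mathbf{B}_0~|~\mathbf{P}_0]\in\mathcal{F}_{<1}$. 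Thus $[\mathbf{B}_0~|~\mathbf{P}_0]$ is an exact solution, whence $\varepsilon_{\mbox{exact}}\le\|\mathbf{P}_0\|<\varepsilon_0\le\varepsilon_{\mbox{exact}}$, a contradiction. Hence $\delta:=f_1(\mathbf{B}_0,\mathbf{P}_0)>0$.

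The key step is to show that, because the norm constraint is slack, $[\mathbf{B}_0~|~\mathbf{P}_0]$ is actually a global minimizer of $f_1$ over the entire convex set $\mathcal{F}_{\le1}$. Take any $[\mathbf{B}'~|~\mathbf{P}']\in\mathcal{F}_{\le1}$ and look at the segment $[\mathbf{B}_t~|~\mathbf{P}_t]=(1-t)[\mathbf{B}_0~|~\mathbf{P}_0]+t[\mathbf{B}'~|~\mathbf{P}']$. By convexity of $\mathcal{F}_{\le1}$ (the preceding Lemma) this segment lies in $\mathcal{F}_{\le1}$, and $\|\mathbf{P}_t\|\le(1-t)\|\mathbf{P}_0\|+t\|\mathbf{P}'\|\le(1-t)\|\mathbf{P}_0\|+t\le\varepsilon_0$ for all $t\in[0,t_1]$ with $t_1=(\varepsilon_0-\|\mathbf{P}_0\|)/(1-\|\mathbf{P}_0\|)>0$; so $[\mathbf{B}_t~|~\mathbf{P}_t]$ is feasible for $P_1(\varepsilon_0)$ on that range. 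Optimality of $[\mathbf{B}_0~|~\mathbf{P}_0]$ gives $f_1(\mathbf{B}_t,\mathbf{P}_t)\ge\delta$, while convexity of $f_1$ (an affine map composed with an induced norm) gives $f_1(\mathbf{B}_t,\mathbf{P}_t)\le(1-t)\delta+tf_1(\mathbf{B}',\mathbf{P}')$; combining these and dividing by $t>0$ yields $\delta\le f_1(\mathbf{B}',\mathbf{P}')$. Since $[\mathbf{B}'~|~\mathbf{P}']$ was arbitrary, $[\mathbf{B}_0~|~\mathbf{P}_0]$ globally minimizes $f_1$ over $\mathcal{F}_{\le1}$, with optimal value $\delta>0$.

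To finish, observe that $\varepsilon_0\le\varepsilon_{\mbox{exact}}$ with $\varepsilon_{\mbox{exact}}<+\infty$ means an exact solution exists; pick any $[\mathbf{B}^\ast~|~\mathbf{P}^\ast]\in\mathcal{F}_{<1}\subseteq\mathcal{F}_{\le1}$ with $(\mathbf{I}-\mathbf{P}^\ast)^{-1}\mathbf{B}^\ast=\mathbf{W}$, so $f_1(\mathbf{B}^\ast,\mathbf{P}^\ast)=0<\delta$, contradicting the global minimality of $[\mathbf{B}_0~|~\mathbf{P}_0]$ established above. Therefore $\|\mathbf{P}_0\|=\varepsilon_0$. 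The main obstacle is the degenerate regime $\varepsilon_{\mbox{exact}}=+\infty$ (no exact, i.e.\ zero-$f_1$, point available as a reference): there one must instead exclude a strictly positive interior global minimum of $f_1$, which I would handle either by a genericity/reachability hypothesis on $\mathbf{W}$ or by appealing to the strict monotonicity of the optimal value of $P_1(\cdot)$ developed in the corollary that follows; the potential singularity of $\mathbf{I}-\mathbf{P}$ never intervenes here because $\|\mathbf{P}_0\|<\varepsilon_0<1$ throughout.
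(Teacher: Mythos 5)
Your argument is sound and complete when $\varepsilon_{\mbox{exact}}<+\infty$: Step 1 correctly rules out $\delta=0$ via the minimality in the definition of $\varepsilon_{\mbox{exact}}$, and Step 2 (a slack norm constraint plus convexity of $f_1$ and of $\mathcal{F}_{\leq1}$ promotes the constrained minimizer to a global minimizer of $f_1$ over all of $\mathcal{F}_{\leq1}$) is a clean and correct local-to-global argument. But the case $\varepsilon_{\mbox{exact}}=+\infty$ is not a deferrable side issue --- it is the typical regime here, since sparsity generally precludes an exact solution --- and neither of your proposed escapes works: appealing to the strict monotonicity of the optimal value of $P_1(\cdot)$ ``developed in the corollary that follows'' is circular, because Corollary~\ref{cor_dec} is itself deduced from this lemma; and a genericity hypothesis on $\mathbf{W}$ is not part of the statement. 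So as written there is a genuine gap.

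The gap closes immediately once you pick the right comparison point in your own Step 2: $[\mathbf{B}'~|~\mathbf{P}']=[\mathbf{0}~|~\mathbf{I}]$ always lies in $\mathcal{F}_{\leq1}$ (the self-weights $p_{ll}$ are admitted by the sparsity pattern, cf.~\eqref{alg2}, and $\|\mathbf{I}\|=1$ for any induced norm) and satisfies $f_1(\mathbf{0},\mathbf{I})=\|\mathbf{IW}-\mathbf{W}\|=0$. Your Step 2 then yields $\delta\leq 0$, contradicting Step 1 in every case, with no split on $\varepsilon_{\mbox{exact}}$ needed. This is in fact the paper's proof in thin disguise: the paper perturbs to $\mathbf{B}_1=\alpha\mathbf{B}_0$, $\mathbf{P}_1=(1-\alpha)\mathbf{I}+\alpha\mathbf{P}_0$, which is exactly the convex combination $(1-\alpha)[\mathbf{0}~|~\mathbf{I}]+\alpha[\mathbf{B}_0~|~\mathbf{P}_0]$, along which $f_1$ is not merely convex but linear ($f_1=\alpha\delta_0$); your segment argument specializes to theirs for this choice of endpoint. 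One genuine merit of your write-up is Step 1: the paper's conclusion that $\alpha\delta_0<\delta_0$ silently assumes $\delta_0>0$, and your $\varepsilon_{\mbox{exact}}$ argument is precisely what justifies that.
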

\begin{proof}
Let the minimum value of the objective,~$f_1$, be denoted by~$\delta_0$, i.e.,
\begin{equation}
\delta_0 = f_1(\mathbf{B}_0,\mathbf{P}_0).
\end{equation}
We prove this by contradiction. Assume, on the contrary, that~$\|\mathbf{P}_0\| = \varepsilon^\prime < \varepsilon_0$. Define
\begin{equation}\label{alp0}
\alpha_0\triangleq\dfrac{1-\varepsilon_0}{1-\varepsilon^\prime}.
\end{equation}
Since,~$\varepsilon^\prime < \varepsilon_0<1$, we have~$0<\alpha_0<1$. For~$\alpha_0\leq\alpha<1$, we define another pair,~$\mathbf{B}_1,\mathbf{P}_1$, as
\begin{align}\label{B1d}
\mathbf{B}_1 &\triangleq \alpha\mathbf{B}_0,&\mathbf{P}_1 &\triangleq (1-\alpha)\mathbf{I} + \alpha\mathbf{P}_0.&
\end{align}
Clearly, this choice is feasible, as it does not violate the sparsity constraints of the problem and further lies in the constraint of the optimization in~\eqref{meq112}, since
\begin{eqnarray}
\|\mathbf{P}_1\| &\leq& (1-\alpha) + \alpha \varepsilon^\prime\leq 1 -\alpha(1-\varepsilon^\prime) \leq 1 - \alpha_0(1-\varepsilon^\prime) = \varepsilon_0.
\end{eqnarray}
With the matrices~$\mathbf{B}_1,\mathbf{P}_1$ in~\eqref{B1d}, we have the following value, $\delta_1$, of the objective function,~$f_1$,
\begin{eqnarray}\nonumber
\delta_1 &=& \|\mathbf{B}_1+\mathbf{P}_1\mathbf{W}-\mathbf{W}\|,\\\nonumber
&=& \left\|\alpha\mathbf{B}_0 + \left(\left(1-\alpha\right)\mathbf{I} + \alpha\mathbf{P}_0\right)\mathbf{W} - \mathbf{W}\right\|,\\\nonumber
&=& \left\|\alpha\mathbf{B}_0 + \alpha\mathbf{P}_0\mathbf{W} - \alpha\mathbf{W}\right\|,\\
&=& \alpha f_1(\mathbf{B}_0,\mathbf{P}_0) = \alpha\delta_0.
\end{eqnarray}
Since,~$\alpha<1$ and non-negative, we have~$\delta_1<\delta_0$. This shows that the new pair,~$\mathbf{B}_1,\mathbf{P}_1$, constructed from the pair,~$\mathbf{B}_0,\mathbf{P}_0$, results in a lower value of the objective function. Hence, the pair,~$\mathbf{B}_0,\mathbf{P}_0$, with~$\|\mathbf{P}_0\| = \varepsilon^\prime < \varepsilon_0$ is not optimal, which is a contradiction. Hence,~$f_2(\mathbf{B}_0, \mathbf{P}_0) = \varepsilon_0,$.
\end{proof}

Lemma~\ref{lem_def} shows that if a pair of matrices,~$\mathbf{B}_0,\mathbf{P}_0$, solves the optimization problem~$P_1(\varepsilon_0)$ with~$\varepsilon_0\in\mathcal{E}$, then the pair of matrices,~$\mathbf{B}_0,\mathbf{P}_0$, meets the constraint on~$f_2$ with equality, i.e.,~$f_2(\mathbf{B}_0, \mathbf{P}_0) = \varepsilon_0$.
%Using our notation from Section~\ref{mop_sec}, we say that~$\mathbf{B}_0,\mathbf{P}_0$, solves the optimization problem~$P_1(\varepsilon^\ast)$.
The following corollary follows from Lemma~\ref{lem_def}.

\begin{corollary}\label{cor_dec}
Let~$\varepsilon_0\in\mathcal{E}$, and
\begin{eqnarray}
[\mathbf{B}_0~|~\mathbf{P}_0]&=&\mbox{argmin}_{[\mathbf{B}~|~\mathbf{P}]\in\mathcal{F}_{\leq1}} P_1(\varepsilon_0),\\
\delta_0 &=& f_1(\mathbf{B}_0,\mathbf{P}_0).
\end{eqnarray}
Then,
\begin{equation}
\delta_0 < \delta_\varepsilon,
\end{equation}
for any~$\varepsilon<\varepsilon_0$, where
\begin{eqnarray}
[\mathbf{B}_\varepsilon~|~\mathbf{P}_\varepsilon]&=&\mbox{argmin}_{[\mathbf{B}~|~\mathbf{P}]\in\mathcal{F}_{\leq1}} P_1(\varepsilon),\\
\delta_\varepsilon &=& f_1(\mathbf{B}_\varepsilon,\mathbf{P}_\varepsilon).
\end{eqnarray}
\end{corollary}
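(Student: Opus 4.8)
The plan is to argue by contradiction, combining the elementary monotonicity of the optimal value of $P_1(\cdot)$ under tightening of its constraint with Lemma~\ref{lem_def}. First I would record the trivial direction. Since $\varepsilon<\varepsilon_0$, the feasible set of $P_1(\varepsilon)$, namely $\{[\mathbf{B}~|~\mathbf{P}]\in\mathcal{F}_{\leq1}~:~f_2(\mathbf{B},\mathbf{P})\leq\varepsilon\}$, is contained in the feasible set of $P_1(\varepsilon_0)$; minimizing the same objective $f_1$ over a smaller set cannot decrease the optimum, so $\delta_\varepsilon\geq\delta_0$. I would also note in passing that $P_1(\varepsilon)$ is feasible for every $\varepsilon\geq0$ (e.g. $\mathbf{B}=\mathbf{0}$, $\mathbf{P}=\mathbf{0}$ satisfies the sparsity constraints and has $\|\mathbf{P}\|=0\leq\varepsilon$) and that, by the compactness argument of Section~\ref{secdd}, the minimum is attained, so $[\mathbf{B}_\varepsilon~|~\mathbf{P}_\varepsilon]$ and $\delta_\varepsilon$ are well defined.

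Next, suppose for contradiction that equality holds, $\delta_\varepsilon=\delta_0$. Then $[\mathbf{B}_\varepsilon~|~\mathbf{P}_\varepsilon]$ is feasible for $P_1(\varepsilon_0)$ and attains its optimal value $\delta_0$, so it is itself an optimal solution of $P_1(\varepsilon_0)$. But $f_2(\mathbf{B}_\varepsilon,\mathbf{P}_\varepsilon)=\|\mathbf{P}_\varepsilon\|\leq\varepsilon<\varepsilon_0$. Since $\varepsilon_0\in\mathcal{E}$, Lemma~\ref{lem_def} forces every optimal solution of $P_1(\varepsilon_0)$ to meet the constraint on $f_2$ with equality, i.e. to satisfy $\|\mathbf{P}\|=\varepsilon_0$; this contradicts $\|\mathbf{P}_\varepsilon\|<\varepsilon_0$. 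Hence $\delta_\varepsilon\neq\delta_0$, and together with $\delta_\varepsilon\geq\delta_0$ this yields $\delta_0<\delta_\varepsilon$, as claimed.

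The one point that needs care — and the only place where the argument is not purely bookkeeping — is the invocation of Lemma~\ref{lem_def}: its conclusion must be read as quantifying over \emph{all} minimizers of $P_1(\varepsilon_0)$, not merely a distinguished one. This is indeed the case, because the perturbation used in the proof of Lemma~\ref{lem_def}, namely $\mathbf{B}_1=\alpha\mathbf{B}_0$ and $\mathbf{P}_1=(1-\alpha)\mathbf{I}+\alpha\mathbf{P}_0$, can be built from any putative optimizer $\mathbf{B}_0,\mathbf{P}_0$ with $\|\mathbf{P}_0\|<\varepsilon_0$ and strictly lowers $f_1$ while staying feasible. Everything else is routine; in particular one checks that $\varepsilon$ itself lies in $\mathcal{E}$ (since $0\leq\varepsilon<\varepsilon_0\leq\varepsilon_{\mbox{exact}}$ and $\varepsilon_0<1$), though this is not actually needed for the argument, which only uses Lemma~\ref{lem_def} at level $\varepsilon_0$.
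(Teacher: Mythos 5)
Your proof is correct and follows essentially the same route as the paper, whose own proof is a one-line appeal to Lemma~\ref{lem_def}: you simply make explicit the two ingredients that the paper leaves implicit, namely the monotonicity $\delta_\varepsilon\geq\delta_0$ from nestedness of the feasible sets and the fact that Lemma~\ref{lem_def} rules out any minimizer of $P_1(\varepsilon_0)$ with $\|\mathbf{P}\|<\varepsilon_0$. Your remark that the lemma must be read as applying to \emph{every} minimizer (which its contradiction argument indeed establishes) is a worthwhile clarification, not a deviation.
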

\begin{proof}
Clearly, from Lemma~\ref{lem_def} there does not exist any~$\varepsilon<\varepsilon_0$ that results in a lower value of the objective function,~$f_1$.
\end{proof}
The above lemma shows that the optimal value of~$f_1$ obtained by solving~$P_1(\varepsilon)$ is strictly greater than the optimal value of~$f_1$ obtained by solving~$P_1(\varepsilon_0)$ for any~$\varepsilon<\varepsilon_0$.

The following theorem now establishes the Pareto-optimality of the solutions of~$P_1(\varepsilon)$.

\begin{theorem}\label{thm_po}
If~$\varepsilon_0\in\mathcal{E}$, then the solution~$\mathbf{B}_0,\mathbf{P}_0$, of the optimization problem,~$P_1(\varepsilon_0)$, is Pareto optimal.
\end{theorem}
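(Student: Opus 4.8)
The plan is to argue by contradiction using the two auxiliary results just proved. Suppose the minimizer $(\mathbf{B}_0,\mathbf{P}_0)$ of $P_1(\varepsilon_0)$ is \emph{not} Pareto optimal. By Definition~\ref{po_def}, there is then a feasible pair $(\mathbf{B},\mathbf{P})\in\mathcal{F}_{\leq1}$ with $f_1(\mathbf{B},\mathbf{P})\leq f_1(\mathbf{B}_0,\mathbf{P}_0)$ and $f_2(\mathbf{B},\mathbf{P})\leq f_2(\mathbf{B}_0,\mathbf{P}_0)$, with at least one inequality strict. Write $\delta_0=f_1(\mathbf{B}_0,\mathbf{P}_0)$, and recall from Lemma~\ref{lem_def} that $f_2(\mathbf{B}_0,\mathbf{P}_0)=\varepsilon_0$ since $\varepsilon_0\in\mathcal{E}$.

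I would then dispose of the strict inequality in $f_1$ first. Because $f_2(\mathbf{B},\mathbf{P})=\|\mathbf{P}\|\leq\varepsilon_0$ and $(\mathbf{B},\mathbf{P})$ already satisfies the sparsity constraint, the pair $(\mathbf{B},\mathbf{P})$ is feasible for the $\varepsilon$-constraint problem $P_1(\varepsilon_0)$; optimality of $(\mathbf{B}_0,\mathbf{P}_0)$ for $P_1(\varepsilon_0)$ then forces $f_1(\mathbf{B},\mathbf{P})\geq\delta_0$. Combined with $f_1(\mathbf{B},\mathbf{P})\leq\delta_0$, this gives $f_1(\mathbf{B},\mathbf{P})=\delta_0$, so the strict inequality cannot sit in the first component. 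Hence it must be in the second: $f_2(\mathbf{B},\mathbf{P})=\varepsilon'<\varepsilon_0$, i.e. $\|\mathbf{P}\|=\varepsilon'$ with $0\leq\varepsilon'<\varepsilon_0$.

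The remaining case is killed by Corollary~\ref{cor_dec}. Since $\varepsilon_0\in\mathcal{E}=[0,1)\cap[0,\varepsilon_{\mbox{exact}}]$ and $0\leq\varepsilon'<\varepsilon_0$, we have $\varepsilon'\in\mathcal{E}$, so $P_1(\varepsilon')$ has a minimizer; let $\delta_{\varepsilon'}$ be its optimal value. On one hand, $(\mathbf{B},\mathbf{P})$ is feasible for $P_1(\varepsilon')$ (sparsity holds and $\|\mathbf{P}\|=\varepsilon'\leq\varepsilon'$), so $\delta_{\varepsilon'}\leq f_1(\mathbf{B},\mathbf{P})=\delta_0$. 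On the other hand, Corollary~\ref{cor_dec} applied with $\varepsilon=\varepsilon'$ gives $\delta_0<\delta_{\varepsilon'}$. These two inequalities are contradictory, so $(\mathbf{B}_0,\mathbf{P}_0)$ must be Pareto optimal.

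I do not expect a genuine obstacle here; the content is entirely bookkeeping with the $\varepsilon$-constraint problems. The one point requiring care is recognizing the division of labor between the two prior results: Lemma~\ref{lem_def} (which pins $f_2(\mathbf{B}_0,\mathbf{P}_0)=\varepsilon_0$ exactly) is what rules out the ``strictly smaller $f_1$, no larger $f_2$'' escape, while Corollary~\ref{cor_dec} is precisely tailored to the leftover ``equal $f_1$, strictly smaller $f_2$'' case; one should also check $\varepsilon'=\|\mathbf{P}\|\geq0$ and $\varepsilon'<1$ so that $\varepsilon'\in\mathcal{E}$ and the cited results apply. If one prefers to avoid citing Corollary~\ref{cor_dec} in the last step, the scaling construction of Lemma~\ref{lem_def} can be repeated verbatim: setting $\mathbf{B}_1=\alpha\mathbf{B}$ and $\mathbf{P}_1=(1-\alpha)\mathbf{I}+\alpha\mathbf{P}$ for any $\alpha$ with $(1-\varepsilon_0)/(1-\varepsilon')\leq\alpha<1$ keeps $\|\mathbf{P}_1\|\leq\varepsilon_0$ and the sparsity pattern intact while attaining $f_1(\mathbf{B}_1,\mathbf{P}_1)=\alpha\delta_0<\delta_0$, again contradicting optimality of $(\mathbf{B}_0,\mathbf{P}_0)$ in $P_1(\varepsilon_0)$.
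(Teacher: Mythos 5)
Your proof is correct and follows essentially the same route as the paper's: a contradiction argument that splits on which component of the Pareto inequality is strict, using Lemma~\ref{lem_def} to pin $f_2(\mathbf{B}_0,\mathbf{P}_0)=\varepsilon_0$ and Corollary~\ref{cor_dec} (equivalently, the scaling construction behind it) to rule out a strictly smaller $\|\mathbf{P}\|$. Your write-up is in fact slightly more careful than the paper's, since you make explicit that the comparison pair is feasible for $P_1(\varepsilon')$ and that $\varepsilon'\in\mathcal{E}$ before invoking the corollary.
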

\begin{proof}
Since,~$\mathbf{B}_0,\mathbf{P}_0$ solves the optimization problem,~$P_1(\varepsilon_0)$, we have~$\|\mathbf{P}_0\|=\ve_0$, from Lemma~\ref{lem_def}. Assume, on the contrary that~$\mathbf{B}_0,\mathbf{P}_0$, are not Pareto-optimal. Then, by definition of Pareto-optimality, there exists a feasible~$\mathbf{B},\mathbf{P}$, with
\begin{eqnarray}\label{pp_1}
f_1(\mathbf{B},\mathbf{P})&\leq& f_1(\mathbf{B}_0,\mathbf{P}_0),\\\label{pp2}
f_2(\mathbf{B},\mathbf{P})&\leq& f_2(\mathbf{B}_0,\mathbf{P}_0),
\end{eqnarray}
with strict inequality in at least one of the above equations. Clearly, if~$f_2(\mathbf{B},\mathbf{P})<f_2(\mathbf{B}_0,\mathbf{P}_0)$, then~$\|\mathbf{P}\|<\ve_0$ and~$\mathbf{B},\mathbf{P}$, are feasible for~$P_1(\ve_0)$. By Corollary~\ref{cor_dec}, we have~$f_1(\mathbf{B},\mathbf{P})> f_1(\mathbf{B}_0,\mathbf{P}_0)$. Hence,~$f_2(\mathbf{B},\mathbf{P})<f_2(\mathbf{B}_0,\mathbf{P}_0)$ is not possible.

On the other hand, if~$f_1(\mathbf{B},\mathbf{P})<f_1(\mathbf{B}_0,\mathbf{P}_0)$ then we contradict the fact that~$\mathbf{B}_0,\mathbf{P}_0$, are optimal for~$P_1(\ve_0)$, since by~\eqref{pp2},~$\mathbf{B},\mathbf{P}$, is also feasible for~$P_1(\ve_0)$.

Thus, in either way, we have a contradiction and~$\mathbf{B}_0,\mathbf{P}_0$ are Pareto-optimal.
\end{proof}

\subsection{Properties of the Pareto Front}\label{pf_props}
In this section, we formally introduce the Pareto front and explore some of its properties in the context of the {\it Learning Problem}. The Pareto front and their properties are essential for the minimization of the utility function,~$u(\mathbf{B},\mathbf{P})$ over~$\mathcal{F}_{\leq\varepsilon}$~\eqref{lp_eq}, as introduced in Section~\ref{secdd}.

Let~$\overline{\mathcal{E}}$ denote the closure of~$\mathcal{E}$. The Pareto front is defined as follows.
\begin{definition}\label{p_f}[Pareto front] Consider~$\ve\in\overline{\mathcal{E}}$. Let~$\mathbf{B}_\ve$,~$\mathbf{P}_\ve$, be a solution of~$P_1(\ve)$ then\footnote{This follows from Lemma~\ref{lem_def}. Also, note that since~$\mathbf{B}_\ve$,~$\mathbf{P}_\ve$, is a solution of~$P_1(\ve)$,~$\mathbf{B}_\ve$,~$\mathbf{P}_\ve$, is Pareto optimal from Theorem~\ref{thm_po}.}~$\varepsilon=f_2(\mathbf{B}_\ve,\mathbf{P}_\ve)$. Denote by~$\delta=f_1(\mathbf{B}_\ve,\mathbf{P}_\ve)$. The collection of all such~$(\varepsilon,\delta)$ is defined as the Pareto front.
\end{definition}

For a given~$\varepsilon\in\overline{\mathcal{E}}$, define~$\delta(\varepsilon)$ to be the minimum of the objective function,~$f_1$, in~$P_1(\varepsilon)$. By Theorem~\ref{thm_po},~$(\varepsilon,\delta(\varepsilon))$ is a point on the Pareto front. We now view the Pareto front as a function,~$\delta: \overline{\mathcal{E}}\longmapsto\mathbb{R}_+$, which maps every~$\varepsilon\in\overline{\mathcal{E}}$ to the corresponding~$\delta(\varepsilon)$. In the following development, we use the Pareto front, as defined in Definition~\ref{p_f}, and the function,~$\delta$, interchangeably. The following lemmas establish properties of the Pareto front.
\begin{lem}\label{lem_sdpf}
The Pareto front is strictly decreasing.
\end{lem}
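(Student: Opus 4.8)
\textbf{Proof proposal for Lemma~\ref{lem_sdpf}.} The plan is to show that if $\varepsilon_1 < \varepsilon_2$ with both in $\overline{\mathcal{E}}$, then $\delta(\varepsilon_1) > \delta(\varepsilon_2)$. This is essentially a restatement of Corollary~\ref{cor_dec}, but with the roles arranged so that the larger constraint value $\varepsilon_2$ plays the part of $\varepsilon_0$ and the smaller value $\varepsilon_1$ plays the part of $\varepsilon$. So the first and main step is simply to invoke Corollary~\ref{cor_dec}: letting $[\mathbf{B}_{\varepsilon_2}~|~\mathbf{P}_{\varepsilon_2}]$ solve $P_1(\varepsilon_2)$ and $[\mathbf{B}_{\varepsilon_1}~|~\mathbf{P}_{\varepsilon_1}]$ solve $P_1(\varepsilon_1)$, the corollary gives $\delta(\varepsilon_2) < \delta(\varepsilon_1)$ for any $\varepsilon_1 < \varepsilon_2$, which is exactly strict monotone decrease of $\delta$.

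The one point that needs care is whether Corollary~\ref{cor_dec} applies at the endpoints of $\overline{\mathcal{E}}$, since it is stated for $\varepsilon_0 \in \mathcal{E}$ rather than $\overline{\mathcal{E}}$. I would handle this by noting that the only way $\overline{\mathcal{E}}$ differs from $\mathcal{E}$ is possibly at the right endpoint $\varepsilon_{\mathrm{exact}}$ (when $\varepsilon_{\mathrm{exact}} < 1$ and is not attained, or the closure adds the boundary point). For an interior pair $\varepsilon_1 < \varepsilon_2$ with $\varepsilon_2 \in \mathcal{E}$, Corollary~\ref{cor_dec} applies directly. For the boundary case $\varepsilon_2 = \varepsilon_{\mathrm{exact}}$, I would argue either by a continuity/limiting argument — taking $\varepsilon_2' \uparrow \varepsilon_{\mathrm{exact}}$ inside $\mathcal{E}$ and using that $\delta$ is non-increasing together with $\delta(\varepsilon_{\mathrm{exact}}) = 0$ while $\delta(\varepsilon_1) > 0$ for $\varepsilon_1 < \varepsilon_{\mathrm{exact}}$ (again from Corollary~\ref{cor_dec}) — or by directly re-running the construction in the proof of Lemma~\ref{lem_def}, which only used $\varepsilon' < \varepsilon_0 < 1$ and feasibility, both of which still hold at the endpoint.

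The substantive content is already carried by Lemma~\ref{lem_def} and Corollary~\ref{cor_dec}: the convex-combination trick $\mathbf{B}_1 = \alpha \mathbf{B}_0$, $\mathbf{P}_1 = (1-\alpha)\mathbf{I} + \alpha \mathbf{P}_0$ shows that shrinking the norm of $\mathbf{P}$ below the budget strictly decreases $f_1$, so the optimizer of $P_1(\varepsilon)$ always saturates the constraint and a smaller budget forces a strictly larger $f_1$. I do not expect a genuine obstacle here; the only thing to be mildly careful about is the endpoint bookkeeping for $\overline{\mathcal{E}}$ versus $\mathcal{E}$, and making sure the strict inequality (not just $\le$) survives at that endpoint, which the $\delta(\varepsilon_{\mathrm{exact}}) = 0$ versus $\delta(\varepsilon_1) > 0$ observation secures.
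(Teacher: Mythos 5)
Your proposal is correct and follows exactly the paper's route: the paper's proof is a one-line appeal to Corollary~\ref{cor_dec}, which is precisely your main step. The additional endpoint bookkeeping for $\overline{\mathcal{E}}$ versus $\mathcal{E}$ is a reasonable bit of extra care that the paper glosses over, but it does not change the substance of the argument.
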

\begin{proof}
The proof follows from Corollary~\ref{cor_dec}.
\end{proof}
\begin{lem}\label{lem_con}
The Pareto front is convex, continuous, and, its left and right derivatives\footnote{At~$\varepsilon=0$, only the right derivative is defined and at~$\varepsilon=\sup{\overline{\mathcal{E}}}$, only the left derivative is defined.} exist at each point on the Pareto front. Also, when~$\ve_{\mbox{exact}}=+\infty$, we have
\begin{equation}
\delta(1) = \lim_{\ve\rightarrow1} \delta(\ve) = 0.
\end{equation}
\end{lem}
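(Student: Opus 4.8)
The plan is to treat $\delta$ as the optimal–value function of the convex program $P_1(\varepsilon)$ and to read off its regularity from standard perturbation facts, handling separately the two endpoints of the interval $\overline{\mathcal{E}}=[0,\varepsilon^\ast]$, where $\varepsilon^\ast\triangleq\min\{1,\varepsilon_{\mbox{exact}}\}$ (so $\varepsilon^\ast=1$ exactly when $\varepsilon_{\mbox{exact}}\geq1$). Note first that $\delta\geq0$ everywhere, since $f_1$ is a norm, and $0\leq\delta(0)\leq\|\mathbf{W}\|<\infty$ (take $\mathbf{B}=\mathbf{0}$ in $P_1(0)$).

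First I would establish convexity of $\delta$. Fix $\varepsilon_1,\varepsilon_2\in\overline{\mathcal{E}}$ and $\lambda\in[0,1]$, and let $(\mathbf{B}_i,\mathbf{P}_i)$ attain $\delta(\varepsilon_i)$ in $P_1(\varepsilon_i)$ (minima are attained, as remarked in Section~\ref{des}). The point $[\mathbf{B}_\lambda~|~\mathbf{P}_\lambda]\triangleq\lambda[\mathbf{B}_1~|~\mathbf{P}_1]+(1-\lambda)[\mathbf{B}_2~|~\mathbf{P}_2]$ still obeys the linear sparsity constraints defining $\mathcal{F}_{\leq1}$, and by convexity of the induced norm $\|\mathbf{P}_\lambda\|\leq\lambda\varepsilon_1+(1-\lambda)\varepsilon_2\leq1$, so it is feasible for $P_1(\lambda\varepsilon_1+(1-\lambda)\varepsilon_2)$; since $\overline{\mathcal{E}}$ is an interval, the argument on the left lies in the domain. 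As $f_1(\mathbf{B},\mathbf{P})=\|\mathbf{B}+\mathbf{P}\mathbf{W}-\mathbf{W}\|$ is a norm of an affine function of $(\mathbf{B},\mathbf{P})$, hence convex, we get $\delta(\lambda\varepsilon_1+(1-\lambda)\varepsilon_2)\leq f_1(\mathbf{B}_\lambda,\mathbf{P}_\lambda)\leq\lambda\delta(\varepsilon_1)+(1-\lambda)\delta(\varepsilon_2)$. A finite convex function on an interval is continuous on its interior and has finite one-sided derivatives at every interior point and the inward-pointing one-sided derivative at each endpoint \cite{Boyd-CVXBook}; this already delivers continuity on $(0,\varepsilon^\ast)$ and the asserted statement about left/right derivatives (at $\varepsilon=0$ only the right, at $\varepsilon=\varepsilon^\ast$ only the left, as in the footnote). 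What remains is continuity at the two endpoints.

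For the left endpoint I would use the explicit bound obtained by switching off $\mathbf{P}$: if $(\mathbf{B}_\varepsilon,\mathbf{P}_\varepsilon)$ attains $\delta(\varepsilon)$, then $(\mathbf{B}_\varepsilon,\mathbf{0})$ is feasible for $P_1(0)$ (zeroing $\mathbf{P}$ violates no sparsity or norm constraint), so $\delta(0)\leq\|\mathbf{B}_\varepsilon-\mathbf{W}\|\leq\|\mathbf{B}_\varepsilon+\mathbf{P}_\varepsilon\mathbf{W}-\mathbf{W}\|+\|\mathbf{P}_\varepsilon\|\,\|\mathbf{W}\|\leq\delta(\varepsilon)+\varepsilon\|\mathbf{W}\|$; combined with $\delta(\varepsilon)\leq\delta(0)$ (Lemma~\ref{lem_sdpf}, equivalently the feasible set of $P_1$ grows with $\varepsilon$) this squeezes $\delta(\varepsilon)\to\delta(0)$ as $\varepsilon\to0^{+}$ (and incidentally shows the right derivative at $0$ is $\geq-\|\mathbf{W}\|$, finite). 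For the right endpoint, the key observation is that for every $\varepsilon\in[0,1]$ the pair $(\mathbf{0},\varepsilon\mathbf{I})$ is feasible for $P_1(\varepsilon)$ — adding a multiple of $\mathbf{I}$ to a feasible $\mathbf{P}$ keeps the sparsity pattern, cf.\ the feasibility remark following~\eqref{B1d}, and $\|\varepsilon\mathbf{I}\|=\varepsilon$ — and it has objective value $f_1(\mathbf{0},\varepsilon\mathbf{I})=\|(\varepsilon-1)\mathbf{W}\|=(1-\varepsilon)\|\mathbf{W}\|$. Hence $0\leq\delta(\varepsilon)\leq(1-\varepsilon)\|\mathbf{W}\|$ for all $\varepsilon\in[0,1]$. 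If $\varepsilon_{\mbox{exact}}\geq1$ (in particular if $\varepsilon_{\mbox{exact}}=+\infty$), then $\varepsilon^\ast=1$, the bound at $\varepsilon=1$ forces $\delta(1)=0$, and letting $\varepsilon\to1^{-}$ gives $\lim_{\varepsilon\to1}\delta(\varepsilon)=0=\delta(1)$, which is both continuity at $1$ and the last displayed identity of the lemma. If instead $\varepsilon_{\mbox{exact}}<1$, then $\varepsilon^\ast=\varepsilon_{\mbox{exact}}$ and a minimizer $(\mathbf{B}^\ast,\mathbf{P}^\ast)$ realizing $\varepsilon_{\mbox{exact}}=\|\mathbf{P}^\ast\|$ with $(\mathbf{I}-\mathbf{P}^\ast)^{-1}\mathbf{B}^\ast=\mathbf{W}$ satisfies $\mathbf{B}^\ast+\mathbf{P}^\ast\mathbf{W}-\mathbf{W}=(\mathbf{I}-\mathbf{P}^\ast)\mathbf{W}+\mathbf{P}^\ast\mathbf{W}-\mathbf{W}=\mathbf{0}$, so $\delta(\varepsilon_{\mbox{exact}})=0$; then convexity gives $\delta(t\varepsilon+(1-t)\varepsilon_{\mbox{exact}})\leq t\,\delta(\varepsilon)$ for $t\in(0,1)$, and letting $t\to0^{+}$ shows the left limit at $\varepsilon_{\mbox{exact}}$ is $\leq0$, hence $=0=\delta(\varepsilon_{\mbox{exact}})$ by nonnegativity.

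The routine interior part (convexity, continuity, existence of one-sided derivatives) is immediate once $\delta$ is recognized as the value function of a convex program, so I expect the main obstacle to be the endpoint bookkeeping: separating the case $\varepsilon_{\mbox{exact}}<1$ (where $\delta$ attains $0$ at the right endpoint and continuity there needs the convexity squeeze) from the case $\varepsilon_{\mbox{exact}}\geq1$ (where $1\in\overline{\mathcal{E}}$ and $\delta(1)=0$ must be produced directly from the family $(\mathbf{0},\varepsilon\mathbf{I})$), together with the small point that adding multiples of $\mathbf{I}$ to $\mathbf{P}$ respects the sparsity constraint — which is exactly the feasibility argument already used in~\eqref{B1d}.
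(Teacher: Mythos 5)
Your proof is correct and follows essentially the same route as the paper: convexity of $\delta$ via a convex combination of minimizers and the triangle inequality, continuity and one-sided derivatives as standard consequences of convexity, and $\delta(1)=0$ from the feasible pair $\mathbf{B}=\mathbf{0}$, $\mathbf{P}=\mathbf{I}$. Your version is in fact slightly tighter in two places: the convexity step avoids the paper's detour through $\varepsilon_3=\|\mathbf{P}_3\|$ and the strict-decrease lemma by observing directly that the convex combination is feasible for $P_1(\lambda\varepsilon_1+(1-\lambda)\varepsilon_2)$, and the explicit bound $0\leq\delta(\varepsilon)\leq(1-\varepsilon)\|\mathbf{W}\|$ from the family $(\mathbf{0},\varepsilon\mathbf{I})$ settles continuity at the endpoints of $\overline{\mathcal{E}}$, which convexity alone does not guarantee at a boundary point of the domain.
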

\begin{proof}
Let~$\varepsilon=f_2(\cdot)$ be the horizontal axis of the Pareto front, and let~$\delta(\varepsilon) = f_1(\cdot)$ be the vertical axis. By definition of the Pareto front, for each pair~$(\varepsilon,\delta(\varepsilon))$ on the Pareto front, there exists matrices~$\mathbf{B}_\varepsilon,\mathbf{P}_\varepsilon$ such that
\begin{eqnarray}
\|\mathbf{P}_\varepsilon\| = \varepsilon,\qquad\mbox{ and }\qquad\|\mathbf{B}_\varepsilon+\mathbf{P}_\varepsilon\mathbf{W-W}\|=\delta(\varepsilon).
\end{eqnarray}

Let~$(\varepsilon_1,\delta(\varepsilon_1))$ and~$\varepsilon_2,\delta(\varepsilon_2)$ be two points on the Pareto front, such that~$\varepsilon_1<\varepsilon_2$. Then, there exists~$\mathbf{B}_1,\mathbf{P}_1$, and~$\mathbf{B}_2,\mathbf{P}_2$, such that
\begin{eqnarray}
\|\mathbf{P}_1\| &=& \varepsilon_1,\qquad\mbox{ and }\qquad\|\mathbf{B}_1+\mathbf{P}_1\mathbf{W-W}\|=\delta(\varepsilon_1),\\
\|\mathbf{P}_2\| &=& \varepsilon_2,\qquad\mbox{ and }\qquad\|\mathbf{B}_2+\mathbf{P}_2\mathbf{W-W}\|=\delta(\varepsilon_2).
\end{eqnarray}
For some~$0\leq\mu\leq1$, define
\begin{eqnarray}
\mathbf{B}_3 &=& \mu\mathbf{B}_1+(1-\mu)\mathbf{B}_2,\\
\mathbf{P}_3 &=& \mu\mathbf{P}_1+(1-\mu)\mathbf{P}_2.
\end{eqnarray}
Clearly,~$[\mathbf{B}_3~|~\mathbf{P}_3]\in\mathcal{F}_{\leq1}$ as the sparsity constraint is not violated and
\begin{equation}\label{e3b}
\|\mathbf{P}_3\| \leq \mu\|\mathbf{P}_1\|+(1-\mu)\|\mathbf{P}_2\| < 1,
\end{equation}
since~$\|\mathbf{P}_1\|<1$ and~$\|\mathbf{P}_2\|<1$. Let
\begin{equation}
\varepsilon_3 = \|\mathbf{P}_3\|,
\end{equation}
and let
\begin{equation}
z(\varepsilon_3) = \|\mathbf{B}_3 + \mathbf{P}_3\mathbf{W} - \mathbf{W}\|.
\end{equation}
We have
\begin{eqnarray}\nonumber
z(\varepsilon_3) &=& \|\mu\mathbf{B}_1+(1-\mu)\mathbf{B}_2 + (\mu\mathbf{P}_1+(1-\mu)\mathbf{P}_2)\mathbf{W} - \mathbf{W}\|,\\\nonumber
&=&\|\mu\mathbf{B}_1+\mu\mathbf{P}_1\mathbf{W} - \mu\mathbf{W} + (1-\mu)\mathbf{B}_2 + (1-\mu)\mathbf{P}_2\mathbf{W} - (1-\mu)\mathbf{W}\|,\\\nonumber
&\leq&\mu\|\mathbf{B}_1+\mathbf{P}_1\mathbf{W} - \mathbf{W}\|+(1-\mu)\|\mathbf{B}_2 + \mathbf{P}_2\mathbf{W} - \mathbf{W}\|,\\
&=&\mu\delta(\varepsilon_1)+(1-\mu)\delta(\varepsilon_2).
\end{eqnarray}
Since~$(\varepsilon_3,z(\varepsilon_3))$ may not be Pareto-optimal, there exists a Pareto optimal point,~$(\varepsilon_3,\delta(\varepsilon_3))$, at~$\varepsilon_3$ (from Lemma~\ref{lem_def}) and we have
\begin{eqnarray}\nonumber
\delta(\varepsilon_3)&\leq& z(\varepsilon_3),\\\label{com2}
&\leq&\mu\delta(\varepsilon_1)+(1-\mu)\delta(\varepsilon_2).
\end{eqnarray}
From~\eqref{e3b}, we have
\begin{equation}
\varepsilon_3\leq \mu\varepsilon_1+(1-\mu)\varepsilon_2,
\end{equation}
and since the Pareto front is strictly decreasing (from Lemma~\ref{lem_sdpf}), we have
\begin{equation}\label{com1}
\delta(\mu\varepsilon_1+(1-\mu)\varepsilon_2) \leq \delta(\varepsilon_3).
\end{equation}
From~\eqref{com1} and~\eqref{com2}, we have
\begin{equation}
\delta(\mu\varepsilon_1+(1-\mu)\varepsilon_2) \leq \mu\delta(\varepsilon_1)+(1-\mu)\delta(\varepsilon_2),
\end{equation}
which establishes convexity of the Pareto front. Since, the Pareto front is convex, it is continuous, and it has left and right derivatives \cite{rockaf_book}.

Clearly,~$\delta(1) = \lim_{\ve\rightarrow1}\delta(\ve)$ by continuity of the Pareto front. By choosing~$\mathbf{P=I}$ and~$\mathbf{B=0}$, we have~$\delta(1) = 0$. Note that~$(1,0)$ lies on the Pareto front when~$\ve_{\mbox{exact}}=+\infty$. Indeed, for any~$\mathbf{B,P}$ satisfying the sparsity constraints, we simultaneously cannot have
\begin{eqnarray}
\|\mathbf{P}\|&\leq&1,\\
\mbox{or}\qquad\|\mathbf{B+PW-W}\| &\leq& 0,
\end{eqnarray}
with strict inequality in at least one of the above equations. Thus, the pair~$\mathbf{B=0},\mathbf{P=I}$ is Pareto-optimal leading to~$\delta(1)=0$.
\end{proof}

\subsection{Proof of Theorem~\ref{main1}}\label{proof_th}
With the Pareto-optimal solutions of MOP established in Section~\ref{po_sols} and the properties of the Pareto front in Section~\ref{pf_props}, we now prove Theorem~\ref{main1}.

\begin{proof}
We prove the theorem by contradiction. Let~$\|\mathbf{P}_\varepsilon\|=\varepsilon^\prime\leq\varepsilon$, and~$\delta^\prime=\|\mathbf{B}_\varepsilon+\mathbf{P}_\varepsilon\mathbf{W}-\mathbf{W}\|$. Assume, on the contrary, that~$\mathbf{B}_\varepsilon,\mathbf{P}_\varepsilon$ is not Pareto-optimal. From Lemma~\ref{lem_def}, there exists a Pareto-optimal solution~$\mathbf{B}^\ast,\mathbf{P}^\ast$, at~$\varepsilon^\prime$, such that
\begin{equation}
\|\mathbf{P}^\ast\|=\varepsilon^\prime, \mbox{ and }\delta(\varepsilon^\prime) = \|\mathbf{B}^\ast + \mathbf{P}^\ast\mathbf{W} - \mathbf{W}\|,
\end{equation}
with~$\delta(\varepsilon^\prime) < \delta^\prime$, since~$\mathbf{B}_\varepsilon,\mathbf{P}_\varepsilon$, is not Pareto-optimal. Since,~$\|\mathbf{P}_\varepsilon\|=\varepsilon^\prime\leq\varepsilon$, the Pareto-optimal solution,~$\mathbf{B}^\ast,\mathbf{P}^\ast$, is feasible for the {\it Learning Problem}. In this case, the utility function for the Pareto-optimal solution,~$\mathbf{B}^\ast,\mathbf{P}^\ast$, is
\begin{eqnarray}
u(\mathbf{B}^\ast,\mathbf{P}^\ast) & = & \dfrac{1}{1-\left\|\mathbf{P}^\ast\right\|}\left\|\mathbf{B}^\ast+ \mathbf{P}^\ast\mathbf{W}-\mathbf{W}\right\|,\\
&=& \dfrac{(\varepsilon^\prime)}{1-\varepsilon^\prime},\\
&<&\dfrac{\delta^\prime}{1-\varepsilon^\prime},\\
&=&u(\mathbf{B}_\varepsilon,\mathbf{P}_\varepsilon).
\end{eqnarray}
Hence,~$\mathbf{B}_\varepsilon,\mathbf{P}_\varepsilon$ is not an optimal solution of the {\it Learning Problem}, which is a contradiction. Hence,~$\mathbf{B}_\varepsilon,\mathbf{P}_\varepsilon$ is Pareto-optimal.
\end{proof}
The above theorem suggests that it suffices to find the optimal solutions of the {\it Learning Problem} from the set of Pareto-optimal solutions, i.e., the Pareto front. The next section addresses the minimization of the utility function,~$\util$, and formulates the performance-convergence rate trade-offs.

\section{Minimization of the utility function}\label{mo_ufn}
In this section, we develop the solution of the {\it Learning Problem} from the Pareto front. The solution of the {\it Learning Problem}~\eqref{lp_eq} lies on the Pareto front as already established in Theorem~\ref{main1}. Hence, it suffices to choose a Pareto-optimal solution from the Pareto front that minimizes~\eqref{lp_eq} under the given constraints. In the following, we study properties of the utility function.

\subsection{Properties of the utility function}
With the help of Theorem~\ref{main1}, we now restrict the utility function to the Pareto-optimal solutions\footnote{Note that when~$\ve_{\mbox{exact}}=+\infty$, the solution~$\mathbf{B=0},\mathbf{P=I}$ is Pareto-optimal, but the utility function is undefined here, although the MOP is well-defined. Hence, for the utility function, we consider only the Pareto-optimal solutions with~$\|\mathbf{P}\|$ in~$\mathcal{E}$.}. By Lemma~\ref{lem_def}, for every~$\ve\in\mathcal{E}$, there exists a Pareto-optimal solution,~$\mathbf{B}_\ve,\mathbf{P}_\ve$, with
\begin{equation}
\|\mathbf{P}_\ve\| = \ve,\qquad\mbox{and}\qquad\|\mathbf{B}_\ve+\mathbf{P}_\ve\mathbf{W-W}\|=\delta(\ve).
\end{equation}
Also, we note that, for any Pareto-optimal solution,~$\mathbf{B},\mathbf{P}$, the corresponding utility function,
\begin{equation}
\util=\dfrac{\|\mathbf{B}+\mathbf{P}\mathbf{W-W}\|}{1-\|\mathbf{P}\|} = \dfrac{\delta(\|\mathbf{P}\|)}{1-\|\mathbf{P}\|}.
\end{equation}
This permits us to redefine the utility function as,~$u^\ast:\mathcal{E}\longmapsto\mathbb{R}_+$, such that, for any Pareto-optimal solution,~$\mathbf{B},\mathbf{P}$,
\begin{equation}
\util=u^\ast(\|\mathbf{P}\|)
\end{equation}
We establish properties of~$u^\ast$, which enable determining the solutions of the {\it Learning Problem}.

\begin{lem}\label{u_noninc_lem}
The function~$u^\ast(\ve)$, for~$\ve\in\mathcal{E}$, is non-increasing, i.e., for~$\ve_1, \ve_2\in\mathcal{E}$ with~$\ve_1<\ve_2$, we have
\begin{equation}\label{u_noninc_eq}
u^\ast(\ve_2)\leq u^\ast(\ve_1).
\end{equation}
Hence,
\begin{equation}
\min_{[\mathbf{B}~|~\mathbf{P}]\in\mathcal{F}_{\leq\varepsilon}} \util = u^\ast(\ve).
\end{equation}
\end{lem}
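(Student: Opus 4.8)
The plan is to establish the monotonicity of $u^\ast$ first by a direct feasible-point construction, in the spirit of the proof of Lemma~\ref{lem_def}, and then to deduce the identity $\min_{[\mathbf{B}~|~\mathbf{P}]\in\mathcal{F}_{\leq\varepsilon}}u(\mathbf{B},\mathbf{P})=u^\ast(\varepsilon)$ from this monotonicity together with Theorem~\ref{main1}. Recall from the preceding discussion that on Pareto-optimal solutions $u^\ast(\varepsilon)=\delta(\varepsilon)/(1-\varepsilon)$, where $\delta$ is the Pareto front.

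For the monotonicity, I would fix $\varepsilon_1<\varepsilon_2$ in $\mathcal{E}$ and pick a solution $\mathbf{B}_1,\mathbf{P}_1$ of $P_1(\varepsilon_1)$, so that $\|\mathbf{P}_1\|=\varepsilon_1$ and $\|\mathbf{B}_1+\mathbf{P}_1\mathbf{W}-\mathbf{W}\|=\delta(\varepsilon_1)$ by Lemma~\ref{lem_def}. Set $\alpha=(1-\varepsilon_2)/(1-\varepsilon_1)\in(0,1)$ and define $\mathbf{B}'=\alpha\mathbf{B}_1$ and $\mathbf{P}'=(1-\alpha)\mathbf{I}+\alpha\mathbf{P}_1$, the same move as in~\eqref{B1d}. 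Scaling and adding a diagonal term do not enlarge the support, so the sparsity constraint is preserved; using $\|\mathbf{I}\|=1$ one gets $\|\mathbf{P}'\|\leq(1-\alpha)+\alpha\varepsilon_1=1-\alpha(1-\varepsilon_1)=\varepsilon_2$, hence $[\mathbf{B}'~|~\mathbf{P}']$ is feasible for $P_1(\varepsilon_2)$. By the same algebra as in the computation of $\delta_1$ in the proof of Lemma~\ref{lem_def}, the identity terms cancel and $\|\mathbf{B}'+\mathbf{P}'\mathbf{W}-\mathbf{W}\|=\alpha\,\delta(\varepsilon_1)$; so by optimality in $P_1(\varepsilon_2)$ we get $\delta(\varepsilon_2)\leq\alpha\,\delta(\varepsilon_1)$, and therefore
\[
u^\ast(\varepsilon_2)=\frac{\delta(\varepsilon_2)}{1-\varepsilon_2}\leq\frac{\alpha\,\delta(\varepsilon_1)}{1-\varepsilon_2}=\frac{\delta(\varepsilon_1)}{1-\varepsilon_1}=u^\ast(\varepsilon_1),
\]
which is~\eqref{u_noninc_eq}.

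For the ``Hence'' part, the bound $\min_{[\mathbf{B}~|~\mathbf{P}]\in\mathcal{F}_{\leq\varepsilon}}u(\mathbf{B},\mathbf{P})\leq u^\ast(\varepsilon)$ is immediate, since a solution $\mathbf{B}_\varepsilon,\mathbf{P}_\varepsilon$ of $P_1(\varepsilon)$ has $\|\mathbf{P}_\varepsilon\|=\varepsilon$, hence lies in $\mathcal{F}_{\leq\varepsilon}$, and attains $u(\mathbf{B}_\varepsilon,\mathbf{P}_\varepsilon)=u^\ast(\varepsilon)$. For the reverse inequality I would take a minimizer $\mathbf{B}^\star,\mathbf{P}^\star$ of $u$ over $\mathcal{F}_{\leq\varepsilon}$ (attained, as noted in Section~\ref{secdd}); by Theorem~\ref{main1} it is Pareto-optimal, so $u(\mathbf{B}^\star,\mathbf{P}^\star)=u^\ast(\|\mathbf{P}^\star\|)$. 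Since $\|\mathbf{P}^\star\|\leq\varepsilon$ and $[0,\varepsilon]\subseteq\mathcal{E}$ whenever $\varepsilon\in\mathcal{E}$, the monotonicity just proved gives $u^\ast(\|\mathbf{P}^\star\|)\geq u^\ast(\varepsilon)$, so the minimum is at least $u^\ast(\varepsilon)$. Combining the two bounds yields the stated equality.

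The main subtlety is the order of the argument: deriving the minimum identity straight from Theorem~\ref{main1} would be circular, because fixing the minimizer's objective value at a point $\|\mathbf{P}^\star\|\leq\varepsilon$ already requires knowing $u^\ast$ is non-increasing. So monotonicity has to come first, and it should be obtained via the self-contained scaling construction rather than through the Pareto-front apparatus. The one place needing care is verifying that $\mathbf{P}'$ remains in the feasible set --- that the added $(1-\alpha)\mathbf{I}$ respects the sparsity pattern and that $\|\mathbf{P}'\|\leq\varepsilon_2$ --- but this is handled exactly as in Lemma~\ref{lem_def}.
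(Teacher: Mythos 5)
Your proof is correct. The second half (the ``Hence'' identity) is essentially the paper's argument recast as two inequalities: the paper also restricts the minimization to Pareto-optimal solutions via Theorem~\ref{main1} and then concludes $\min_{0\leq\ve'\leq\ve}u^\ast(\ve')=u^\ast(\ve)$ from monotonicity, exactly as you do. The first half, however, takes a genuinely different route. The paper writes $\ve_2=\mu\ve_1+(1-\mu)\cdot 1$ and invokes \emph{convexity of the Pareto front} (Lemma~\ref{lem_con}) together with $\delta(1)=0$ to get $\delta(\ve_2)\leq\mu\delta(\ve_1)$; you obtain the identical bound $\delta(\ve_2)\leq\alpha\,\delta(\ve_1)$ (your $\alpha$ equals the paper's $\mu$) by exhibiting the explicit feasible point $\left(\alpha\mathbf{B}_1,\,(1-\alpha)\mathbf{I}+\alpha\mathbf{P}_1\right)$ for $P_1(\ve_2)$ --- the same scaling move used inside the proof of Lemma~\ref{lem_def} --- after which the two computations coincide line for line. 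Your version buys two things: it does not need Lemma~\ref{lem_con} at all, and it sidesteps a small wrinkle in the paper's argument, namely that $\delta(1)=0$ and the membership of $(1,0)$ in the Pareto front are only established for the case $\ve_{\mbox{exact}}=+\infty$, so the paper's convex-combination-with-$(1,0)$ step needs a word of justification when $\ve_{\mbox{exact}}<1$, whereas your direct construction works uniformly for all $\ve_1<\ve_2$ in $\mathcal{E}$. What the paper's version buys is brevity, given that convexity of the front has already been proved; your observation about the necessary ordering (monotonicity before the minimum identity) is accurate and applies to both arguments.
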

\begin{proof}
Consider~$\ve_1, \ve_2\in\mathcal{E}$ such that~$\ve_1<\ve_2$, then~$\ve_2$ is a convex combination of~$\ve_1$ and~$1$, i.e., there exists a~$0<\mu<1$ such that \begin{equation}
\ve_2 = \mu\ve_1 + (1-\mu).
\end{equation}
From Lemma~\ref{lem_def}, there exist~$\delta(\ve_1)$ and~$\delta(\ve_2)$ on the Pareto front corresponding to~$\ve_1$ and~$\ve_2$, respectively. Since the Pareto front is convex (from Lemma~\ref{lem_con}), we have
\begin{equation}
\delta(\ve_2)\leq\mu\delta(\ve_1)+(1-\mu)\delta(1).
\end{equation}
Recall that~$\delta(1) = 0~$; we have
\begin{eqnarray}\nonumber
u^\ast(\ve_2) &=&
\dfrac{\delta(\ve_2)}{1-\ve_2},\\ \nonumber
&=& \dfrac{\delta(\ve_2)}{1-\mu\ve_1 - (1-\mu)},\\\nonumber
&=&\dfrac{\delta(\ve_2)}{\mu(1-\ve_1)},\\
&\leq&\dfrac{\mu\delta(\ve_1)}{\mu(1-\ve_1)},
\end{eqnarray}
and~\eqref{u_noninc_eq} follows.

We now have
\begin{eqnarray}\nonumber
\min_{[\mathbf{B}~|~\mathbf{P}]\in\mathcal{F}_{\varepsilon}}  \util &=&
\min_{[\mathbf{B}~|~\mathbf{P}]\in\mathcal{F}_{\varepsilon}\mbox{ and }(\mathbf{B,P}) \mbox{ is Pareto-optimal}} \util ,\\\nonumber
&=& \min_{\|\mathbf{P}\|\leq\ve\mbox{ and }(\mathbf{B,P}) \mbox{ is Pareto-optimal}} \util,\\\nonumber
&=&\min_{0\leq\ve^\prime\leq\ve} u^\ast(\ve^\prime),\\
&=& u^\ast(\ve).
\end{eqnarray}
The first step follows from Theorem~\ref{main1}. The second step is just a restatement since the sparsity constraints are included in the MOP. The third step follows from the definition of~$u^\ast$ and finally, we use the non-increasing property of~$u^\ast$ to get the last equation.
\end{proof}

We now study the cost of the utility function. From Lemma~\ref{u_noninc_lem}, we note that this cost is non-increasing as~$\ve$ increases. When~$\ve_{\mbox{exact}}<1$, this cost is~$0$. When~$\ve_{\mbox{exact}}=+\infty$, we may be able to decrease the cost as~$\ve\rightarrow1$. We now define the limiting cost.
\begin{definition}\label{cinfff}[Infimum cost] The infimum cost,~$c_{{\inf}}$, of the utility function is defined as
\begin{equation}
c_{\inf} \triangleq
\left\{
\begin{array}{cc}
\lim_{\ve\rightarrow1} u^\ast(\ve), & \mbox{if }\ve_{\mbox{exact}}=+\infty,\\
0,&\mbox{otherwise}.
\end{array}
\right.
\end{equation}
\end{definition}
Clearly, the cost does not increase as~$\ve\rightarrow1$ from Lemma~\ref{u_noninc_lem}. If~$\ve_{\mbox{exact}}=+\infty$, it is not possible for the utility function,~$u^\ast(\ve)$, to attain~$c_{\inf}$, since~$u^\ast(\ve)$ is undefined at~$\|\mathbf{P}\|=1$. We note that when~$\ve_{\mbox{exact}}=+\infty$, the utility function can have a value as close as desired to~$c_{\inf}$, but it cannot attain~$c_{\inf}$. The following lemma establishes the cost of the utility function,~$u^\ast(\ve)$, as~$\ve\rightarrow1$.
\begin{lem}
If~$\ve_{\mbox{exact}}=+\infty$, then the infimum cost,~$c_{\inf}$, is the negative of the left derivative,~$D^-(\delta(\ve))$, of the Pareto front evaluated at~$\ve=1$.
\end{lem}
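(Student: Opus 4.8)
The plan is to unwind the definition of $c_{\inf}$ and recognize the resulting limit as precisely the left derivative of the Pareto front at $\ve=1$; the two ingredients that make this work have already been established. From Lemma~\ref{lem_con}, when $\ve_{\mbox{exact}}=+\infty$ the Pareto front is convex on $\overline{\mathcal{E}}=[0,1]$, its one-sided derivatives exist at every point (in particular the left derivative $D^-(\delta(\ve))$ exists at $\ve=1$), and $\delta(1)=0$. From Lemma~\ref{u_noninc_lem}, $u^\ast$ is non-increasing on $\mathcal{E}=[0,1)$ and nonnegative, so $\lim_{\ve\to1}u^\ast(\ve)$ exists and the quantity $c_{\inf}$ of Definition~\ref{cinfff} is well-defined.

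First I would recall that, restricted to the Pareto-optimal solutions, $u^\ast(\ve)=\delta(\ve)/(1-\ve)$. Using $\delta(1)=0$ I would rewrite this as (minus) a secant slope of the graph of $\delta$ through $(\ve,\delta(\ve))$ and $(1,\delta(1))$:
\begin{equation}
u^\ast(\ve)=\frac{\delta(\ve)}{1-\ve}=\frac{\delta(\ve)-\delta(1)}{1-\ve}=-\,\frac{\delta(\ve)-\delta(1)}{\ve-1}.
\end{equation}

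Next I would let $\ve\to1^-$ (the only way $\ve$ can approach $1$ from within $\mathcal{E}$). By Definition~\ref{cinfff}, $c_{\inf}=\lim_{\ve\to1}u^\ast(\ve)$, and by the identity above this equals $-\lim_{\ve\to1^-}\big(\delta(\ve)-\delta(1)\big)/(\ve-1)$. The remaining limit is, by definition, the left derivative $D^-(\delta(\ve))$ evaluated at $\ve=1$, whose existence is guaranteed by Lemma~\ref{lem_con}. Hence $c_{\inf}=-D^-(\delta(\ve))\big|_{\ve=1}$, which is the claim.

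The argument is essentially a one-line reduction, so there is no substantial obstacle; the only thing to be careful about is that every limit invoked genuinely exists, and this is exactly what convexity of the Pareto front and monotonicity of $u^\ast$ provide. For completeness I would also remark that, since $\delta$ is convex, strictly decreasing (Lemma~\ref{lem_sdpf}), and $\delta(1)=0$, the secant slopes $\big(\delta(\ve)-\delta(1)\big)/(\ve-1)$ are negative and nondecreasing in $\ve$, hence converge to a finite nonpositive limit; this shows $c_{\inf}\ge0$, consistent with $u^\ast\ge0$, and confirms that $D^-(\delta(\ve))\big|_{\ve=1}$ is finite.
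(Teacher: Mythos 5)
Your proof is correct and follows essentially the same route as the paper: rewrite $u^\ast(\ve)=\delta(\ve)/(1-\ve)$ as the negative secant slope through $(\ve,\delta(\ve))$ and $(1,\delta(1))$ using $\delta(1)=0$, and let $\ve\to1^-$ to obtain $-D^-(\delta(\ve))\big|_{\ve=1}$. Your added remarks on why the limits exist (convexity of $\delta$ and monotonicity of $u^\ast$) are a welcome bit of extra care that the paper leaves implicit.
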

\begin{proof}
Recall that~$\delta(1)=0$. Then~$c_{\inf}$ is given by
\begin{eqnarray}\nonumber
c_{\inf} &=&\lim_{\ve\rightarrow1}u^\ast(\ve),\\\nonumber
&=& \lim_{\ve\rightarrow1}\dfrac{\delta(\ve)}{1-\ve},\\\nonumber
&=& \lim_{\ve\rightarrow1}\dfrac{\delta(\ve)-\delta(1)}{1-\ve},\\
&=& -D^-(\delta(\ve)) |_{\ve=1}.
\end{eqnarray}
\end{proof}

\subsection{Graphical Representation of the Analytical Results}
In this section, we graphically view the analytical results developed earlier. To this aim, we establish a graphical procedure using the following lemma.
\begin{lem}\label{lem_gr}
Let~$(\ve,\delta(\ve))$ be a point on the Pareto front and~$g(\ve)$ a straight line that passes through~$(\ve,\delta(\ve))$ and~$(1,0)$. The cost associated to the Pareto-optimal solution(s) corresponding to~$(\ve,\delta(\ve))$ is both the (negative) slope and the intercept (on the vertical axis) of~$g(\ve)$.
\end{lem}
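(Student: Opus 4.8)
The plan is to unwind the definition of the cost $u^\ast$ introduced in Section~\ref{mo_ufn} and then read off the slope and intercept of $g$ by elementary analytic geometry. Recall that, by Lemma~\ref{lem_def} and the discussion preceding Lemma~\ref{u_noninc_lem}, a point $(\varepsilon,\delta(\varepsilon))$ on the Pareto front corresponds to a Pareto-optimal pair $\mathbf{B}_\varepsilon,\mathbf{P}_\varepsilon$ with $\|\mathbf{P}_\varepsilon\|=\varepsilon$ and $\|\mathbf{B}_\varepsilon+\mathbf{P}_\varepsilon\mathbf{W}-\mathbf{W}\|=\delta(\varepsilon)$, whose associated cost is
\begin{equation}
u(\mathbf{B}_\varepsilon,\mathbf{P}_\varepsilon)=u^\ast(\varepsilon)=\frac{\delta(\varepsilon)}{1-\varepsilon}.
\end{equation}
Since the cost is only considered for $\varepsilon\in\mathcal{E}\subseteq[0,1)$, we have $1-\varepsilon>0$, so the two points $(\varepsilon,\delta(\varepsilon))$ and $(1,0)$ have distinct abscissas and $g$ is a well-defined, non-vertical line.

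Next I would compute the slope of $g$ directly from the two prescribed points, namely
\begin{equation}
\frac{0-\delta(\varepsilon)}{1-\varepsilon}=-\frac{\delta(\varepsilon)}{1-\varepsilon}=-u^\ast(\varepsilon),
\end{equation}
so the negative of the slope is exactly the cost $u^\ast(\varepsilon)$. Writing $g$ in point-slope form through the point $(1,0)$ gives $g(x)=-u^\ast(\varepsilon)\,(x-1)=u^\ast(\varepsilon)\,(1-x)$; evaluating at $x=0$ yields the vertical-axis intercept $g(0)=u^\ast(\varepsilon)$, which again equals the cost. This establishes both assertions of the lemma.

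There is essentially no obstacle here: the statement is a bookkeeping identity that repackages the definition $u^\ast(\varepsilon)=\delta(\varepsilon)/(1-\varepsilon)$ as a property of the chord through $(\varepsilon,\delta(\varepsilon))$ and $(1,0)$. The only point meriting a word of care is the well-definedness of $g$ (i.e.\ $\varepsilon\neq 1$), which holds because costs are defined only on $\mathcal{E}\subset[0,1)$. The value of this reformulation is downstream: it turns the minimization of $u^\ast$ over $\mathcal{E}$ into the geometric task of sweeping the pencil of lines through $(1,0)$ across the (convex, strictly decreasing) Pareto front and minimizing the intercept, which is where Lemmas~\ref{lem_sdpf}--\ref{lem_con} will be brought to bear.
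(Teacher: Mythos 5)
Your proof is correct and follows essentially the same route as the paper: compute the slope of the chord through $(\ve,\delta(\ve))$ and $(1,0)$ to get $-u^\ast(\ve)=-\delta(\ve)/(1-\ve)$, then read off the vertical-axis intercept from the point-slope form to get $u^\ast(\ve)$. Your added remark on well-definedness ($\ve\in\mathcal{E}\subset[0,1)$, so $\ve\neq 1$) is a minor refinement the paper leaves implicit.
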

\begin{proof}
We define the straight line,~$g(\ve)$, as
\begin{equation}
g(\ve) = c_1\ve + c_2,
\end{equation}
where~$c_1$ is its slope and~$c_2$ is its intercept on the vertical axis. Since~$g(\ve)$ passes through~$(\ve,\delta(\ve))$ and~$(1,0)$, its slope,~$c_1$, is given by
\begin{eqnarray}
c_1 = \dfrac{\delta(\ve) - 0}{\ve-1}= -u^\ast(\ve).
\end{eqnarray}
Since~$g(\ve)$ passes through~$(1,0)$, at~$\ve=1$ we have
\begin{eqnarray}
c_2 = [g(\ve) - c_1\ve]_{\ve=1} = g(1) - c_1 = u^\ast(\ve).
\end{eqnarray}
\end{proof}
Figure~\ref{pfig1} illustrates Lemma~\ref{lem_gr}, graphically. Let~$(\varepsilon^\ast,\delta^\ast)$ be a point on the Pareto front. The cost,~$c^{\ast}$, of the utility function,~$u^\ast(\varepsilon^\ast)$, is the intercept of the straight line passing through~$(\varepsilon^\ast,\delta^\ast)$ and~$(1,0)$.
\begin{figure}
\centering
\subfigure[]
{
    \label{pfig1}
    \includegraphics[width=1.9in]{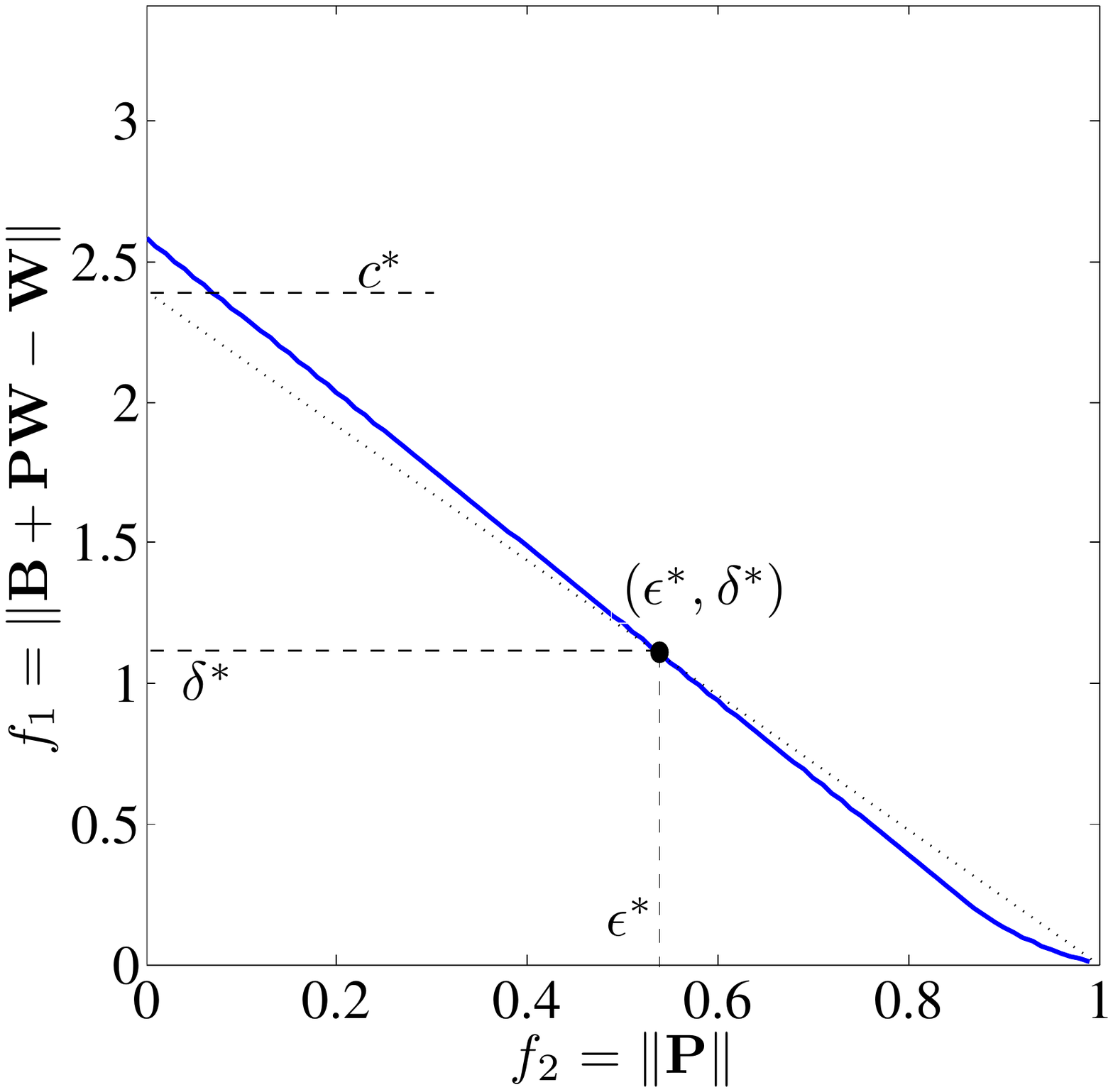}
}
\hspace{.1cm}
\subfigure[]
{
    \label{pfig2}
    \includegraphics[width=1.9in]{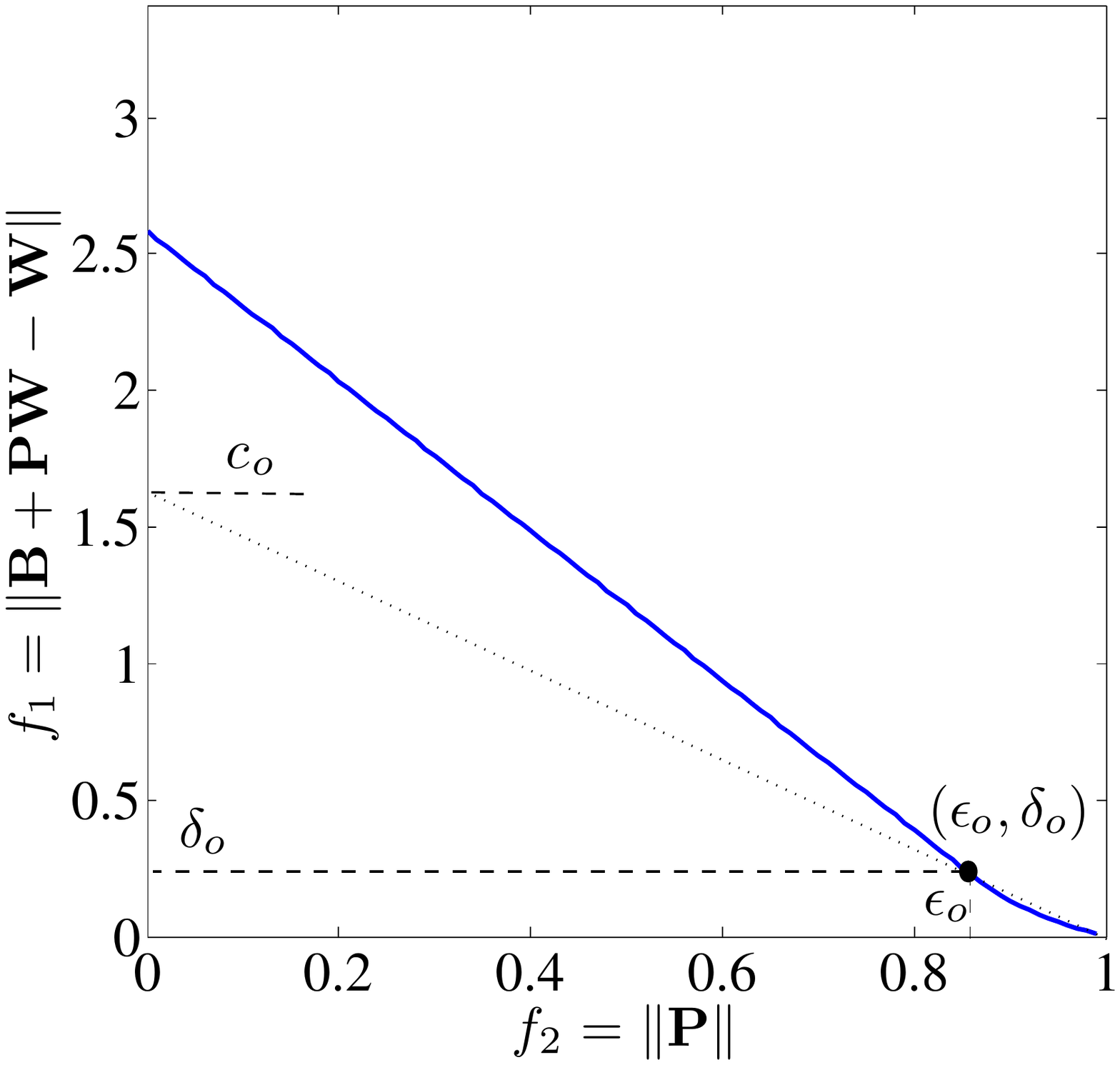}
}
\hspace{.1cm}
\subfigure[]
{
    \label{pfig3}
    \includegraphics[width=1.9in]{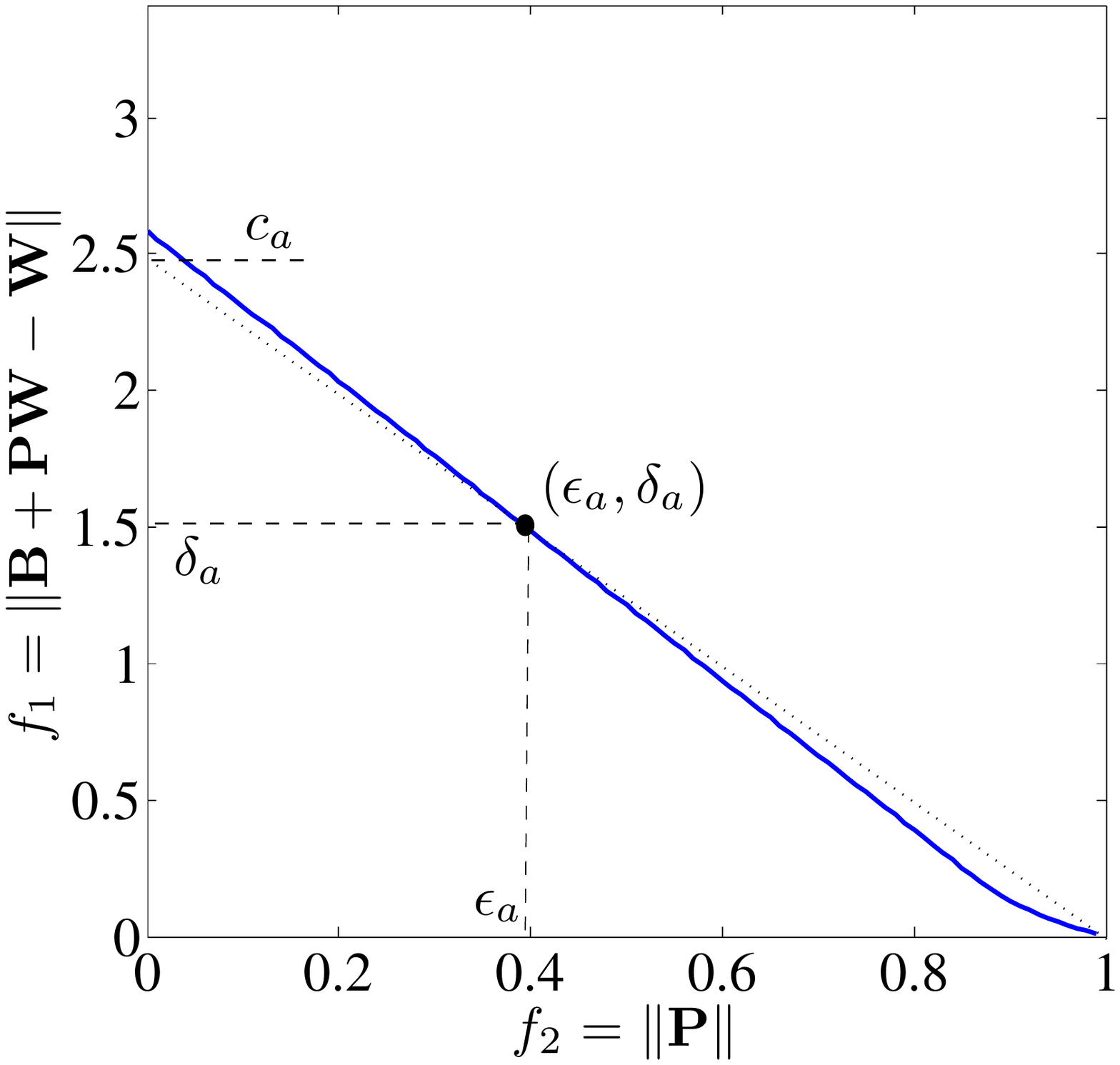}
}
\caption{(a) Graphical illustration of Lemma~\ref{lem_gr}. (b) Illustration of case (i) in performance-speed tradeoff. (c) Illustration of case (ii) in performance-speed tradeoff.}
\label{pfig}
\end{figure}

\subsection{Performance-Speed Tradeoff:~$\ve_{\mbox{exact}}=+\infty$}\label{infW}
In this case, no matter how large we choose~$\|\mathbf{P}\|$, the HDC does not converge to the exact solution. By Lemma~\ref{lem_conv}, the convergence rate of the HDC depends on~$\rho(\mathbf{P})$ and thus upper bounding~$\|\mathbf{P}\|$ leads to a guarantee on the convergence rate. Also, from Lemma~\ref{u_noninc_lem}, the utility function is non-increasing as we increase~$\|\mathbf{P}\|$. We formulate the {\it Learning Problem} as a performance-speed tradeoff. From the Pareto front and the constant cost straight lines, we can address the following two questions.

\begin{enumerate}[(i)]
\item Given a pre-specified performance,~$c_o$ (the cost of the utility function), choose a Pareto-optimal solution that results into the fastest convergence of the HDC to achieve~$c_o$.
    We carry out this procedure by drawing a straight line that passes the points~$(0,c_o)$
    and~$(1,0)$ in the Pareto plane. Then, we pick the Pareto-optimal solution from the Pareto front that lies on this straight line and also has the smallest value of~$\|\mathbf{P}\|$. See Figure~\ref{pfig2}.

\item Given a pre-specified convergence speed,~$\varepsilon_a$, of the HDC algorithm, choose a Pareto-optimal solution that results into the smallest cost of the utility function,~$\util$.
    We carry out this procedure by choosing the Pareto-optimal solution,~$(\varepsilon_a,\delta_a)$, from the Pareto front. The cost of the utility function for this solution is then the intercept (on the vertical axis) of the constant cost line that passes through both~$(\varepsilon_a,\delta_a)$ and~$(1,0)$. See Figure~\ref{pfig3}.
\end{enumerate}

We now characterize the steady state error. Let~$\mathbf{B}_o,\mathbf{P}_o$, be the operating point of the HDC obtained from either of the two tradeoff scenarios described above. Then, the steady state error in the limiting state,~$\mathbf{X}_{\infty}$, of the network when the HDC with~$\mathbf{B}_o,\mathbf{P}_o$ is implemented, is given by
\begin{equation}
e_{ss} = \|(\mathbf{I-P}_o)^{-1}\mathbf{B}_o-\mathbf{W}\|,
\end{equation}
which is clearly bounded above by~\eqref{ut_fn}.

\subsection{Exact Solution:~$\ve_{\mbox{exact}}<1$}\label{fW}
In this case, the optimal operating point of the HDC algorithm is the Pareto-optimal solution corresponding to~$(\varepsilon_{\mbox{exact}},0)$ on the Pareto front. A typical Pareto front in this case is shown in Figure~\ref{pfig4}, labeled as Case I. A special case is when the sparsity pattern of~$\mathbf{B}$ is the same as the sparsity of the weight matrix,~$\mathbf{W}$. We can then choose
\begin{equation}
\mathbf{B=W},\qquad\mathbf{P=0},
\end{equation}
as the solution to the {\it Learning Problem} and the Pareto front is a single point~$(0,0)$ shown as Case II in Figure~\ref{pfig4}.
\begin{figure}
\centering
\includegraphics[width=3in]{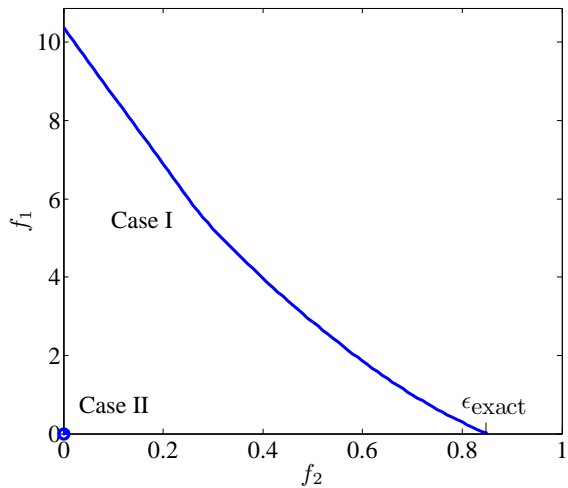}
\caption{Typical Pareto front.}
\label{pfig4}
\end{figure}

If it is desirable to operate the HDC algorithm at a faster speed than corresponding to~$\varepsilon_{\mbox{exact}}$, we can consider the performance-speed tradeoff in Section~\ref{infW} to get the appropriate operating point.

\section{Conclusions}\label{conc}
In this paper, we present a framework for the design and analysis of linear distributed algorithms. We present the analysis problem in the context of Higher Dimensional Consensus (HDC) algorithms that contains the average-consensus as a special case. We establish the convergence conditions, the convergent state and the convergence rate of the HDC. We also define the consensus subspace and derive its dimensions and relate them to the number of anchors in the network. We present the inverse problem of deriving the parameters of the HDC to converge to a given state as learning in large-scale networks. We show that the solution of this learning problem is a Pareto-optimal solution of a multi-objective optimization problem (MOP). We explicitly prove the Pareto-optimality of the MOP solutions. We then prove that the Pareto front (collections of the Pareto-optimal solutions) is convex and strictly decreasing. Using these properties of the MOP solutions, we solve the learning problem and also formulate performance-speed tradeoffs.
\small{
\appendices
\section{Important Results}\label{app1}
\begin{lem}\label{lem1}
If a matrix~$\mathbf{P}$ is such that \[\rho(\mathbf{P})<1,\] then
\begin{eqnarray}
\label{lem2_eq}
\lim_{t\rightarrow\infty}\mathbf{\mathbf{P}}^{t+1} &=&
\mathbf{0},
\\\label{lem2_eq2}
\lim_{t\rightarrow\infty}\sum_{k=0}^{t}
\mathbf{\mathbf{P}}^{k}&=&
\left(\mathbf{\mathbf{I}-\mathbf{P}}\right)^{-1}.
\end{eqnarray}
\end{lem}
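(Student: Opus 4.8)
The plan is to establish \eqref{lem2_eq} first and then deduce \eqref{lem2_eq2} from it via a telescoping identity. For \eqref{lem2_eq}, I would start from the spectral radius characterization~\eqref{rPdef}, namely $\rho(\mathbf{P}) = \lim_{q\rightarrow\infty}\|\mathbf{P}^q\|^{1/q}$ for any induced matrix norm. Since $\rho(\mathbf{P})<1$, fix $\gamma$ with $\rho(\mathbf{P})<\gamma<1$. By definition of the limit in~\eqref{rPdef}, there is a $q_0$ such that $\|\mathbf{P}^q\|^{1/q}\leq\gamma$, equivalently $\|\mathbf{P}^q\|\leq\gamma^q$, for all $q\geq q_0$. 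Since $0<\gamma<1$, the right-hand side tends to $0$ as $q\rightarrow\infty$, so $\|\mathbf{P}^{t+1}\|\rightarrow 0$, and hence $\mathbf{P}^{t+1}\rightarrow\mathbf{0}$, which is~\eqref{lem2_eq}.

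Next I would check that $\mathbf{I}-\mathbf{P}$ is invertible. This follows because $\rho(\mathbf{P})<1$ forces every eigenvalue of $\mathbf{P}$ to have modulus strictly less than $1$; in particular $1$ is not an eigenvalue of $\mathbf{P}$, so $\det(\mathbf{I}-\mathbf{P})\neq 0$. Then I would use the elementary telescoping identity
\begin{equation}
(\mathbf{I}-\mathbf{P})\sum_{k=0}^{t}\mathbf{P}^k = \mathbf{I}-\mathbf{P}^{t+1},
\end{equation}
which holds for every $t$ (the cross terms cancel in pairs). Left-multiplying by $(\mathbf{I}-\mathbf{P})^{-1}$ gives
\begin{equation}
\sum_{k=0}^{t}\mathbf{P}^k = (\mathbf{I}-\mathbf{P})^{-1}\left(\mathbf{I}-\mathbf{P}^{t+1}\right).
\end{equation}
Letting $t\rightarrow\infty$ and using~\eqref{lem2_eq}, the term $\mathbf{P}^{t+1}$ vanishes and, by continuity of matrix multiplication, the partial sums converge to $(\mathbf{I}-\mathbf{P})^{-1}$, establishing~\eqref{lem2_eq2}.

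I do not anticipate a genuine obstacle here, since all the heavy lifting is done by~\eqref{rPdef}, which is available from the preliminaries. The only point that warrants a careful word is the passage from $\rho(\mathbf{P})<1$ to a summable bound $\|\mathbf{P}^q\|\leq\gamma^q$: one must insert the intermediate constant $\gamma$ strictly between $\rho(\mathbf{P})$ and $1$ rather than trying to use $\rho(\mathbf{P})$ itself, because $\|\mathbf{P}^q\|^{1/q}$ need only approach $\rho(\mathbf{P})$ in the limit and may exceed it for finite $q$. An alternative route for~\eqref{lem2_eq}, if one prefers to avoid~\eqref{rPdef}, is to invoke the existence of an induced norm $\|\cdot\|_\star$ with $\|\mathbf{P}\|_\star<1$ (obtainable from a Jordan-form similarity), after which $\|\mathbf{P}^{t+1}\|_\star\leq\|\mathbf{P}\|_\star^{t+1}\rightarrow 0$ and equivalence of norms on a finite-dimensional space finishes the argument; but the~\eqref{rPdef}-based proof above is the most direct.
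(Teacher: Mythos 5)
Your proof is correct. The paper itself offers no argument for this lemma --- its ``proof'' is the single sentence ``The proof is straightforward'' --- so there is nothing to compare against; what you have written is the standard and complete justification: Gelfand's formula~\eqref{rPdef} with an intermediate constant $\gamma\in(\rho(\mathbf{P}),1)$ to get $\|\mathbf{P}^q\|\leq\gamma^q$ eventually (and you are right that one cannot use $\rho(\mathbf{P})$ itself here), invertibility of $\mathbf{I}-\mathbf{P}$ because $1$ is not an eigenvalue, and the telescoping Neumann-series identity to pass to the limit. No gaps.
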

\begin{proof}
The proof is straightforward.
\end{proof}

\begin{lem}\label{lem_rank}Let~$r_\mathbf{Q}$ be the rank of the~$M\times M$ matrix~$(\mathbf{I-P})^{-1}$, and~$r_\mathbf{B}$ the rank of the~$M\times n$ matrix~$\mathbf{B}$, then \begin{eqnarray}
\mbox{\rm rank}(\mathbf{I-P})^{-1}\mathbf{B} &\leq& \min(r_\mathbf{Q},r_\mathbf{B}),\\ \mbox{\rm rank}(\mathbf{I-P})^{-1}\mathbf{B} &\geq& r_\mathbf{Q} + r_\mathbf{B} - M.\end{eqnarray}\end{lem}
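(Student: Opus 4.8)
The plan is to recognize Lemma~\ref{lem_rank} as the combination of two classical facts about the rank of a matrix product: the submultiplicativity bound $\mbox{rank}(\mathbf{A}\mathbf{B})\leq\min(\mbox{rank}\,\mathbf{A},\mbox{rank}\,\mathbf{B})$ and Sylvester's rank inequality $\mbox{rank}(\mathbf{A}\mathbf{B})\geq\mbox{rank}\,\mathbf{A}+\mbox{rank}\,\mathbf{B}-k$, where $k$ is the shared inner dimension. Writing $\mathbf{Q}=(\mathbf{I-P})^{-1}\in\mathbb{R}^{M\times M}$ (rank $r_\mathbf{Q}$) and keeping $\mathbf{B}\in\mathbb{R}^{M\times n}$ (rank $r_\mathbf{B}$), the product $\mathbf{Q}\mathbf{B}$ has inner dimension $k=M$, and the two displayed inequalities are exactly these two facts specialized to this case. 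So the proof reduces to recalling the standard arguments.

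First I would establish the upper bound. Every column of $\mathbf{Q}\mathbf{B}$ is a linear combination of the columns of $\mathbf{Q}$, so the column space of $\mathbf{Q}\mathbf{B}$ is contained in that of $\mathbf{Q}$, giving $\mbox{rank}(\mathbf{Q}\mathbf{B})\leq r_\mathbf{Q}$; dually, every row of $\mathbf{Q}\mathbf{B}$ is a linear combination of the rows of $\mathbf{B}$, so $\mbox{rank}(\mathbf{Q}\mathbf{B})\leq r_\mathbf{B}$. Taking the minimum yields the first inequality. This step is immediate and needs no computation.

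Next I would prove the lower bound by viewing $\mathbf{B}$ and $\mathbf{Q}$ as linear maps $\mathbf{B}:\mathbb{R}^n\to\mathbb{R}^M$ and $\mathbf{Q}:\mathbb{R}^M\to\mathbb{R}^M$, so that $\mbox{rank}(\mathbf{Q}\mathbf{B})=\dim\mathbf{Q}(\mbox{range}\,\mathbf{B})$, and applying the rank--nullity theorem to the restriction of $\mathbf{Q}$ to the subspace $\mbox{range}\,\mathbf{B}$:
\begin{equation}
\dim\mathbf{Q}(\mbox{range}\,\mathbf{B})=\dim(\mbox{range}\,\mathbf{B})-\dim\big(\ker\mathbf{Q}\cap\mbox{range}\,\mathbf{B}\big)\geq r_\mathbf{B}-\dim\ker\mathbf{Q}.
\end{equation}
Since $\mathbf{Q}$ acts on $\mathbb{R}^M$, rank--nullity gives $\dim\ker\mathbf{Q}=M-r_\mathbf{Q}$, and rearranging produces $\mbox{rank}(\mathbf{Q}\mathbf{B})\geq r_\mathbf{B}+r_\mathbf{Q}-M$, the second inequality.

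I do not expect a genuine obstacle: in fact, because $\rho(\mathbf{P})<1$ (Lemma~\ref{lem_conv}) forces $(\mathbf{I-P})^{-1}$ to be nonsingular, we have $r_\mathbf{Q}=M$, and both bounds collapse to the exact statement $\mbox{rank}((\mathbf{I-P})^{-1}\mathbf{B})=r_\mathbf{B}$; stating the lemma in the general rank form merely keeps the argument self-contained. The only point requiring care is the bookkeeping of which dimension ($M$ rather than $n$) is the inner dimension of the product, so that the correct term $M-r_\mathbf{Q}$ appears in Sylvester's bound.
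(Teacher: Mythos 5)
Your proof is correct. The paper does not actually prove this lemma---it simply cites pages 95--96 of a linear algebra reference for these two classical facts (the product rank bound and Sylvester's rank inequality)---so your contribution is to supply the standard arguments in full: the column/row-space containment for the upper bound and rank--nullity applied to the restriction of $\mathbf{Q}$ to $\mbox{range}\,\mathbf{B}$ for the lower bound, both of which are carried out correctly, including the identification of $M$ as the inner dimension. Your closing observation that $r_\mathbf{Q}=M$ in the paper's actual use case, so that both bounds collapse to $\mbox{rank}((\mathbf{I-P})^{-1}\mathbf{B})=r_\mathbf{B}$, is exactly how the lemma is deployed in the dimension theorem and in Appendix B.
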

\begin{proof}The proof is available on pages~$95-96$ in~\cite{lin_alg_book_goog}.\end{proof}

\section{Necessary Condition}\label{nc1}
Below, we provide a necessary condition required for the existence of an exact solution of the {\it Learning Problem}.
\begin{lem}
Let~$\rho(\mathbf{P})<1$,~$K<M$, and let~$r_{\mathbf{W}}$ denote the rank of a matrix~$\mathbf{W}$. A necessary condition for $(\mathbf{I-P})^{-1}\mathbf{B=W}$ to hold is
\begin{equation}\label{rankw}
r_\mathbf{B} = r_\mathbf{W}.
\end{equation}
\end{lem}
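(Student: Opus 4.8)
The plan is to derive the rank condition directly from the matrix equation $(\mathbf{I-P})^{-1}\mathbf{B}=\mathbf{W}$ by exploiting the invertibility of $\mathbf{I-P}$, which is guaranteed by the hypothesis $\rho(\mathbf{P})<1$ (via Lemma~\ref{lem1}). First I would rewrite the exact-solution equation in the equivalent form $\mathbf{B} = (\mathbf{I-P})\mathbf{W}$, which is legitimate precisely because $(\mathbf{I-P})$ is nonsingular. Then the whole argument reduces to the elementary fact that multiplication on the left by an invertible square matrix preserves rank: $r_\mathbf{B} = \mathrm{rank}\big((\mathbf{I-P})\mathbf{W}\big) = \mathrm{rank}(\mathbf{W}) = r_\mathbf{W}$.

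The key steps, in order, are: (i) invoke $\rho(\mathbf{P})<1$ to conclude $(\mathbf{I-P})^{-1}$ exists, hence $\mathbf{I-P}$ is an $M\times M$ invertible matrix; (ii) multiply both sides of $(\mathbf{I-P})^{-1}\mathbf{B}=\mathbf{W}$ on the left by $\mathbf{I-P}$ to get $\mathbf{B}=(\mathbf{I-P})\mathbf{W}$; (iii) apply the standard rank identity $\mathrm{rank}(\mathbf{M}\mathbf{N})=\mathrm{rank}(\mathbf{N})$ when $\mathbf{M}$ is square and invertible (one direction follows from $\mathrm{rank}(\mathbf{M}\mathbf{N})\le\mathrm{rank}(\mathbf{N})$, the reverse from writing $\mathbf{N}=\mathbf{M}^{-1}(\mathbf{M}\mathbf{N})$ and applying the same bound again); (iv) conclude $r_\mathbf{B}=r_\mathbf{W}$, which is exactly \eqref{rankw}. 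The hypotheses $K<M$ and the definition of $r_\mathbf{W}$ are essentially bookkeeping: $K<M$ ensures we are in the interesting regime where $\mathbf{B}$ is a genuinely ``tall-and-thin'' $M\times K$ matrix so that $r_\mathbf{B}\le K<M$, which in turn forces $r_\mathbf{W}<M$ — that is, the condition is vacuous unless $\mathbf{W}$ is itself rank-deficient, a remark worth including after the formal proof.

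Honestly, there is no substantial obstacle here; the statement is a short consequence of invertibility plus the rank-preservation lemma, and the only thing to be careful about is citing (or briefly proving inline) the rank identity $\mathrm{rank}(\mathbf{M}\mathbf{N})=\mathrm{rank}(\mathbf{N})$ for invertible $\mathbf{M}$, since the paper has already used a related bound (Lemma~\ref{lem_rank} in Appendix~\ref{app1}) and one could even specialize that lemma, though the direct argument is cleaner. I would keep the proof to three or four lines and append one sentence noting that, combined with $r_\mathbf{B}\le K$, condition~\eqref{rankw} implies $r_\mathbf{W}\le K$, so an exact solution is impossible whenever the desired weight matrix $\mathbf{W}$ has rank exceeding the number of anchors.
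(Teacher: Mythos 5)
Your proposal is correct and follows essentially the same route as the paper: both arguments rest on the invertibility of $\mathbf{I-P}$ (from $\rho(\mathbf{P})<1$) together with the fact that left-multiplication by an invertible $M\times M$ matrix preserves rank. The only cosmetic difference is that the paper establishes this rank preservation by applying the two-sided bounds of Lemma~\ref{lem_rank} (Sylvester's inequality) to $\mathbf{Q}\mathbf{B}$ with $r_\mathbf{Q}=M$, whereas you rewrite the equation as $\mathbf{B}=(\mathbf{I-P})\mathbf{W}$ and invoke the identity directly; your observation that $K<M$ is essentially bookkeeping is also accurate.
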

\begin{proof}
Note that the matrix~$\mathbf{I-P}$ is invertible since~$\rho(\mathbf{P})<1$. Let~$\mathbf{Q=(I-P)}^{-1}$, then~$r_\mathbf{Q}=M$. From Lemma~\ref{lem_rank} in Appendix~\ref{app1} and since by hypothesis~$K<M$,
\begin{eqnarray}
\mbox{rank}(\mathbf{Q}\mathbf{B}) &\leq& r_\mathbf{B},\\
\mbox{rank}(\mathbf{Q}\mathbf{B}) &\geq& M + r_\mathbf{B} - M = r_\mathbf{B}.
\end{eqnarray}
The condition~\eqref{rankw} now follows, since from~\eqref{s1}, we also have
\begin{equation}
\mbox{rank}(\mathbf{Q}\mathbf{B})=r_\mathbf{W}.
\end{equation}
\end{proof}
}
\bibliographystyle{IEEEbib}
\bibliography{ref}

\end{document}